\tikzstyle{nc}=[fill=white, draw=black, shape=circle, minimum size=0.40cm]
\tikzstyle{nr}=[fill=white, draw=black, shape=rectangle, minimum width=0.60cm, minimum height=0.60cm]
\tikzstyle{nrg}=[fill={rgb,255: red,199; green,199; blue,199}, draw=black, shape=rectangle]
\tikzstyle{mr}=[fill=white, draw=black, shape=rectangle, minimum width=1.20cm, minimum height=0.75cm]
\tikzstyle{product}=[fill={rgb,255: red,255; green,5; blue,80}, draw=black, shape=circle, inner sep=0pt, minimum size=0.15cm]
\tikzstyle{unit}=[fill={rgb,255: red,255; green,166; blue,217}, draw=black, shape=circle, inner sep=0pt, minimum size=0.15cm]
\tikzstyle{coproduct}=[fill={rgb,255: red,2; green,145; blue,255}, draw=black, shape=circle, inner sep=0pt, minimum size=0.15cm]
\tikzstyle{counit}=[fill={rgb,255: red,165; green,219; blue,255}, draw=black, shape=circle, inner sep=0pt, minimum size=0.15cm]
\tikzstyle{antipode}=[fill=white, draw=black, shape=circle, inner sep=0pt, minimum size=0.15cm]
\tikzstyle{region}=[fill={rgb,255: red,191; green,191; blue,191}, draw={rgb,255: red,191; green,191; blue,191}, shape=circle, minimum size=0.80cm]
\tikzstyle{boundarydisc}=[fill={rgb,255: red,128; green,128; blue,128}, draw=black, shape=circle, minimum size=0.80cm]
\tikzstyle{identity}=[fill=black, draw=black, shape=circle, inner sep=0pt, minimum size=0.15cm]
\tikzstyle{S}=[fill=white, draw=black, shape=rectangle, minimum width=0.15cm, minimum height=0.15cm]
\tikzstyle{S-}=[fill={rgb,255: red,199; green,199; blue,199}, draw=black, shape=rectangle, minimum width=0.15cm, minimum height=0.15cm]
\tikzstyle{directed}=[draw=black, ->]
\tikzstyle{ds}=[-, dashed]
\tikzstyle{dsd}=[dashed, ->]
\tikzstyle{dc1}=[draw={rgb,255: red,0; green,212; blue,166}, ->]
\tikzstyle{dc2}=[-, draw={rgb,255: red,0; green,212; blue,166}]
\tikzstyle{dd}=[-, dashed, draw={rgb,255: red,0; green,212; blue,166}]
\tikzstyle{p}=[-, draw={rgb,255: red,124; green,95; blue,239}]
\tikzstyle{p2}=[-, dashed, draw={rgb,255: red,124; green,95; blue,239}]
\tikzstyle{rd}=[->, draw={rgb,255: red,255; green,5; blue,80}]
\tikzstyle{rn}=[-, draw={rgb,255: red,255; green,5; blue,80}]
\tikzstyle{bd}=[draw={rgb,255: red,2; green,145; blue,255}, ->]
\tikzstyle{bn}=[-, draw={rgb,255: red,2; green,145; blue,255}]
\def\coloneqq {:=}
\numberwithin{equation}{section}
\newtheorem{Theorem}{Theorem}[section]
\newtheorem{Corollary}[Theorem]{Corollary}
\newtheorem{Lemma}[Theorem]{Lemma}
\newtheorem{Proposition}[Theorem]{Proposition}
 { \theoremstyle{definition}
\newtheorem{Definition}[Theorem]{Definition}
\newtheorem{Remark}[Theorem]{Remark} }
\begin{document}

\newcommand{\arXivNumber}{1911.10147}

\renewcommand{\thefootnote}{}

\renewcommand{\PaperNumber}{040}

\FirstPageHeading

\ShortArticleName{CFT Correlators for Cardy Bulk Fields via String-Net Models}

\ArticleName{CFT Correlators for Cardy Bulk Fields\\ via String-Net Models}

\Author{Christoph SCHWEIGERT and Yang YANG}

\AuthorNameForHeading{C.~Schweigert and Y.~Yang}

\Address{Fachbereich Mathematik, Universit\"at Hamburg, Bereich Algebra und Zahlentheorie,\\
Bundesstra{\ss}e 55, 20146 Hamburg, Germany}
\Email{\href{mailto:christoph.schweigert@uni-hamburg.de}{christoph.schweigert@uni-hamburg.de},
\href{mailto: Yang.Yang@studium.uni-hamburg.de}{Yang.Yang@studium.uni-hamburg.de}}
\URLaddress{\url{https://www.math.uni-hamburg.de/home/schweigert/}}

\ArticleDates{Received November 01, 2020, in final form April 12, 2021; Published online April 21, 2021}

\Abstract{We show that string-net models provide a novel geometric method to construct invariants of mapping class group actions. Concretely, we consider string-net models for a~modular tensor category~${\mathcal C}$. We~show that the datum of a specific commutative symmetric Frobenius algebra in the Drinfeld center~$Z(\mathcal{C})$ gives rise to invariant string-nets. The Frobenius algebra has the interpretation of the algebra of bulk fields of the conformal field theory in the Cardy case.}

\vspace{1mm}

\Keywords{two-dimensional conformal field theory; string-net models; correlators; Cardy case}

\vspace{1mm}

\Classification{81T40}\vspace{3mm}

\section{Introduction}\vspace{2mm}

Two-dimensional conformal field theories, to which we refer as a CFT
in the following, are~quantum field theories that apart from their
intrinsic physical interest, are amenable to a precise mathematical
study. In this paper, we use string-net models to study consistent
systems of bulk field correlators in a class of such models.

A consistent system of correlators in a CFT is obtained by specifying
elements in spaces of~con\-for\-mal blocks, subject to certain consistency
conditions. For a conformal field theory with the monodromy data given
by a braided monoidal category $\mathcal{D}$, the spaces of conformal
blocks can be constructed as morphism spaces in $\mathcal{D}$. They
are endowed with projective actions of mapping class groups given
in terms of the structures on $\mathcal{D}$. For a rational conformal
field theory, the category $\mathcal{D}$ is a (semisimple) modular
tensor category and the spaces of conformal blocks are provided by
the state spaces of a three-dimensional topological field theory,
namely the Reshetikhin--Turaev TFT based on $\mathcal{D}$. In this
framework, the task of finding a consistent system of correlators
is equivalent to finding for each surface $\Sigma$ a vector in the
space of conformal blocks on the double $\widehat{\Sigma}$. This
element has to be invariant under the action of the mapping class group of~$\Sigma$
and the set of elements has to be consistent under sewing of the surfaces.
This problem has been solved completely in \cite{MR2259258,MR1940282,MR2026879,MR2076134,MR2137114},
using in a non-trivial way the geometry of~certain 3-manifolds. This
is not only technically involved, but also a serious obstacle to extend
the approach to more general classes of CFTs, e.g., those based on
non-semisimple modular tensor categories, since a 3d-TFT of Reshetikhin\textendash Turaev
type with values in vector spaces can only be constructed for semisimple
MTCs.

In this article, we only consider bulk fields on oriented surfaces.
Instead of considering the double $\widehat{\Sigma}$ of the surface
$\Sigma$, which for $\Sigma$ oriented without boundary consists
of two copies of $\Sigma$ with opposite orientation, i.e., $\hat{\Sigma}=\Sigma\sqcup\overline{\Sigma}$
(see, e.g.,~\cite[Section 5.1]{MR1940282}), one uses the following
relation for the state space of the Reshetikhin\textendash Turaev
TFT:
$$
Z_{{\rm RT},\mathcal{C}}(\hat{\Sigma})=Z_{{\rm RT},\mathcal{C}}(\Sigma\sqcup\overline{\Sigma})\cong Z_{{\rm RT},\mathcal{C}}(\Sigma)\otimes Z_{{\rm RT},\mathcal{C}}(\overline{\Sigma})\cong Z_{{\rm RT},\mathcal{C}^{\rm rev}\boxtimes\mathcal{C}}(\Sigma)
$$
 and can take the enveloping category $\mathcal{C}^{\rm rev}\boxtimes\mathcal{C}$
of a modular tensor category as the category $\mathcal{D}$ and stick
with the original surface $\Sigma$. Modularity implies that we have
a braided equivalence: $\mathcal{C}^{\rm rev}\boxtimes\mathcal{C}\simeq Z(\mathcal{C})$,
where $Z(\mathcal{C})$ is the Drinfeld center of $\mathcal{C}$,
see, e.g., \cite{MR3996323} for a statement that includes non-semisimple
categories as well. It~is shown in \cite{balsam2010turaevviro2,balsam2010turaevviro3,KirillovBalsam2010, turaev2010two}
that the Reshetikhin--Turaev construction for $Z(\mathcal{C})$ is
equivalent to the extended Turaev--Viro--Barrett--Westbury state-sum
construction based on $\mathcal{C}$, hence we have
$$
Z_{{\rm RT},\mathcal{C}^{\rm rev}\boxtimes\mathcal{C}}(\Sigma)\cong Z_{{\rm RT},Z(\mathcal{C})}(\Sigma)\cong Z_{{\rm TV},\mathcal{C}}(\Sigma).
$$

The string-net model was first introduced in the study of topological
order in condensed matter physics by Levin and Wen \cite{Levin2005}.
The collection of state spaces associated to surfaces are described
by equivalence classes of string-diagrams on compact oriented surfaces
with boundaries and can be extended to a once-extended TFT which has
recently been shown to be equivalent to the Turaev--Viro--Barrett--Westbury
state-sum construction \cite{Goosen2018Oriented1V,kirillov2011string}.
The string-net model has two advantages that are attractive in our
context: first, a vector in the space of conformal blocks can
be described by a string-net, and second, the action of the mapping
class group, when expressed in terms of such vectors, is completely
geometrical (in fact, the consideration of~mapping class groups actions
on string-nets has already appeared in \cite{koenig2010quantum}).

In this paper, we first define fundamental string-nets on a generating
set of surfaces for every commutative symmetric Frobenius algebra
$F$ in the Drinfeld center $Z(\mathcal{C})$, using the structure
morphisms of $F$. We~show in Lemma~\ref{lem:F} that those string-nets
are invariant under the mapping class group action. Moreover, the
prescription extends to a consistent system of correlators in the
sense of \cite{MR3590526} by sewing, where the Frobenius algebra
$F$ befits the algebra of bulk fields, provided that the string-net
on the torus with one boundary circle is invariant under the mapping
class group action. It~can be inferred from the known result~\cite[Theorem 3.4]{MR2551797}
that a~haploid commutative symmetric Frobenius algebra $F\in Z(\mathcal{C})$
satisfies this condition if and only if~$\dim (F)$ equals the
global dimension of the category~$\mathcal{C}$. Then to each surface~$\Sigma$, possibly with non-empty boundary, the assigned correlator
can be obtained as a string-net by decomposing the surface into pairs
of pants and placing the appropriate fundamental string-nets on each
component. For instance, for a surface of genus one with one ingoing
and two outgoing boundary components, we have the following string-net
\begin{figure}[h!]
\centering
$\tikzfig{IN1}$
\caption{\label{fig:example}The string-net assigned to the extended surface
of genus one with one ingoing and two out\-going boundary circles according
to a certain pairs of pants decomposition.}
\end{figure}

{\samepage
Here the green lines are labeled by the Frobenius algebra
$F$, the red and blue circular coupons stand for the multiplication
and co-multiplication of the Frobenius algebra $F$ respectively,
while the purple circles stand for the boundary projectors (introduced
in Remark~\ref{rem:proj}) that account for the half-braiding of $F$.

}

We then restrict to a specific algebra $F_{1}$ satisfying the condition
of modular invariance: the bulk algebra for the Cardy case in which
the modular invariant on a torus is given by the charge conjugation
matrix. The underlying object of the algebra $F_{1}$ is
$$
L=\bigoplus_{i\in\mathcal{I}(\mathcal{C})}X_{i}^{\vee}\otimes X_{i}\in\mathcal{C}
$$
 along with a certain half-braiding (see Section~\ref{subsec:Cardy}).
Here $\mathcal{I}(\mathcal{C})$ stands for the set of isomorphism
classes of simple objects in $\mathcal{C}$ and $X_{i}$ is a fixed
representative for each $i\in\mathcal{I}(\mathcal{C})$. We~show that
for the algebra $F_{1}$, the string-nets describing the correlators
are almost empty (Theo\-rem~\ref{thm:main}). For instance, the string-net
shown in Figure~\ref{fig:example}, after substituting the algebra
with~$F_{1}$, will be shown to be the following string-net
\begin{figure}[h!]
\centering{}$\displaystyle \sum_{i,j,k,l,m,n\in\mathcal{I}(\mathcal{C})}\frac{d_{j}d_{k}d_{l}d_{m}d_{n}}{D^{6}}\;\tikzfig{IN2}$\caption{The simplified form of the string-net assigned to the extended surface
$\Sigma_{1\mid2}^{1}$. }
\end{figure}

These correlators have been constructed in terms of the evaluation
of a 3d-TFT on certain ribbon graphs in 3-manifolds in~\cite{Felder:1999mq}.
The geometry and the ribbon graphs are quite involved. The fact that
we can describe them by almost empty string-nets demonstrates the
advantage of~the string-net construction.

This paper is organized as follows: in Section~\ref{sec:String-net-model},
we briefly review string-net models, follo\-wing~\cite{kirillov2011string}. We~next recall some facts about modular tensor categories in Section~\ref{subsec:Modular-tensor-categories}, review the notion of a consistent
system of bulk field correlators in Section~\ref{subsec:CSOC}. We~define the fundamental string-nets in Section~\ref{subsec:Fund}
and show that they give rise to a consistent system of correlators
when the used Frobenius algebra is modular. Section~\ref{sec:Cardy-case}
is devoted to the Cardy case.

We expect that our results can be generalized in several directions:
beyond the Cardy case and to correlators including also boundary and
defect fields. A generalization of the string net construction to
non-semisimple finite tensor categories remains, at the moment, a
challenge. It~would allow us to address correlators of logarithmic
conformal field theories as well in a~two-dimensional setting.

\section{String-net models \label{sec:String-net-model}}

\subsection{Spherical fusion categories}

String-net models are defined for spherical fusion categories. In
this section, we review some basic facts of spherical fusion categories
and fix our notations. We~denote by~$\mathbb{K}$ an algebraically
closed field of characteristic~0.

Recall that a \textit{right dual} of an object $V$ in a strict monoidal
category $\mathcal{C}$ is an object $V^{\vee}$ together with morphisms
$\mathrm{coev}_{V}\in\mathrm{Hom}_{\mathcal{C}}\big(\mathbb{I},V\otimes V^{\vee}\big)$
and $\mathrm{ev}_{V}\in\mathrm{Hom}_{\mathcal{C}}\big(V^{\vee}\otimes V,\mathbb{I}\big)$
satisfying
$$
(\mathrm{id}_{V}\otimes\mathrm{ev}_{V})\circ(\mathrm{coev}_{V}\otimes\mathrm{id}_{V})=\mathrm{id}_{V}
$$
and
$$
(\mathrm{ev}_{V}\otimes\mathrm{id}_{V^{\vee}})\circ(\mathrm{id}_{V^{\vee}}
\otimes\mathrm{coev}_{V})=\mathrm{id}_{V^{\vee}}.
$$
We depict the right duality maps graphically as
\begin{center}
$\tikzfig{GC7}=\tikzfig{GC9}\,, \qquad\qquad \tikzfig{GC8}=\tikzfig{GC10}\!.$
\end{center}

\noindent Here we replaced $V^{\vee}$ by $V$ upon reversing the
direction of the arrow. Left duality is defined similarly by reversing
the arrows in the graphical notation. A monoidal category in which
every object has both left and right duals is called a \textit{rigid
monoidal category}.

{\sloppy
A \textit{pivotal structure} on a rigid monoidal category is a monoidal
natural isomorphism \mbox{$\omega\colon\mathrm{id}_{\mathcal{C}}\Rightarrow(-)^{\vee\vee}$}.
A pivotal structure is called strict if $\mathrm{id}_{\mathcal{C}}=(-)^{\vee\vee}$
and $\omega=\mathrm{id}_{\mathcal{\mathrm{id}_{\mathcal{C}}}}$. It~is known that every pivotal category is pivotally equivalent to a
pivotal category with strict pivotal structure~\cite[Theorem 2.2]{MR2381536},
hence we will assume the pivotal structure to be strict in the following
without loss of generality. For a strict pivotal category, the left
and right duality strictly coincide as functors.

}

In a pivotal category we have the notions of \textit{right and left
traces} for any $f\in\mathrm{End}_{\mathcal{C}}(V)$. Graphically
\begin{center}
$\mathrm{tr}_{\mathrm{r}}(f)\coloneqq\tikzfig{GC17}\in\mathrm{End}_{\mathcal{C}}(\mathbb{I}), \qquad\qquad \mathrm{tr}_{\mathrm{l}}(f)\coloneqq\tikzfig{GC18}\in\mathrm{End}_{\mathcal{C}}(\mathbb{I})$.
\end{center}

\noindent When applied to $\mathrm{id}_{V}\in\mathrm{End}_{\mathcal{C}}(V)$,
we get the definitions of the \textit{left and right categorical dimension}
of the object $V\in\mathcal{C}$. A pivotal category is called \textit{spherical}
if the left and right traces coincide, i.e., $\mathrm{tr}(f)\coloneqq\mathrm{tr}_{\mathrm{r}}(f)=\mathrm{tr}_{\mathrm{l}}(f)$
and $\dim (V)\coloneqq\dim _{\mathrm{r}}(V)=\dim _{\mathrm{l}}(V)$.
\begin{Definition}
A \textit{fusion category} over $\mathbb{K}$ is a rigid $\mathbb{K}$-linear
monoidal category $\mathcal{C}$ that is finitely semisimple, with
the monoidal unit $\mathbb{I}$ being simple. A \textit{spherical
fusion category} over $\mathbb{K}$ is a~sphe\-rical category $\mathcal{C}$
that is also a fusion category over $\mathbb{K}$.
\end{Definition}

\looseness=1
Here being $\mathbb{K}$-linear means that the sets of morphisms are
$\mathbb{K}$-vector spaces and the composition as well as the monoidal
product are bilinear. Being finitely semisimple means that there are
finitely many isomorphism classes of simple objects (objects with
no non-trivial subobject) and every object is a direct sum of finitely
many simple objects. Note that $\mathbb{K}$-linearity and finite-semisimplicity
together imply that the morphism spaces are finite dimensional.

Let us denote the set of isomorphism classes of simple objects by
$\mathcal{I}(\mathcal{C})$, and fix a repre\-sen\-ta\-tive~$X_{i}$ for
each $i\in\mathcal{I}(\mathcal{C})$. In addition, we require $0\in\mathcal{I}(\mathcal{C})$
and $X_{0}=\mathbb{I}$. Duality furnishes a~invo\-lu\-tion on $\mathcal{I}(\mathcal{C})$,
i.e., $i\mapsto\bar{i}\coloneqq\big[X_{i}^{\vee}\big]$. We~require that $X_{\bar{i}}=X_{i}^{\vee}$
whenever $i\neq\bar{i}$. Since $\mathbb{K}$ is assumed to be algebraically
closed, the only finite dimensional division algebra over $\mathbb{K}$
is $\mathbb{K}$ itself. Thus we have Schur's lemma: $\mathrm{Hom}_{\mathcal{C}}(X_{i},X_{j})\cong\delta_{i,j}\mathbb{K}.$
In particular, $d_{X}\coloneqq\dim (X)\in\mathrm{End}_{\mathcal{C}}(\mathbb{I})\cong\mathbb{K}$.
Define the \textit{global dimension} of the spherical fusion category
$\mathcal{C}$ to be
$$
D^{2}\coloneqq\sum_{i\in\mathcal{I}(\mathcal{C})}d_{i}^{2}.
$$
Despite of the notation, we do not choose a square root of the global
dimension. By~\cite[Theorem~2.3]{MR2183279}, $D^{2}\ne0$.

We define the functor $\underbrace{\mathcal{C}\boxtimes\cdots\boxtimes\mathcal{C}}_{n}\to\mathcal{V}\text{ect}_{\mathbb{K}}$
by
$$
V_{1}\boxtimes\cdots\boxtimes V_{n}\mapsto\mathrm{Hom}_{\mathcal{C}}(\mathbb{I},V_{1}\otimes\cdots\otimes V_{n}).
$$
The pivotal structure furnishes a natural isomorphism by
$$
\begin{array}{rcl}
z_{V_{1}\boxtimes\cdots\boxtimes V_{n}}\colon\ \mathrm{Hom}_{\mathcal{C}}(\mathbb{I},V_{1},\dots,V_{n})&\xrightarrow{\cong}&
\mathrm{Hom}_{\mathcal{C}}(\mathbb{I},V_{n},V_{1},\dots,V_{n-1}),
\\[2ex]
\tikzfig{SF1}\phantom{A}&\mapsto&\phantom{A}\tikzfig{SF2}
\end{array}
$$
It can be seen that $z^{n}=\mathrm{id}$. Thus, up to a
natural isomorphism, $\mathrm{Hom}_{\mathcal{C}}(\mathbb{I},V_{1},\dots,V_{n})$
depends~only on the cyclic order of $V_{1},\dots,V_{n}$. This allows
us to represent an element $\varphi\in$ $\mathrm{Hom}_{\mathcal{C}}(\mathbb{I},V_{1},\dots,V_{n})$
by a round coupon with $n$ outgoing legs colored by $V_{1},\dots,V_{n}$
in clockwise order
\begin{center}
$\tikzfig{SF3}$
\end{center}

\noindent
We are able to connect legs with dual labels: define the composition
map
$$
\begin{array}{rcl}
\mathrm{Hom}_{\mathcal{C}}\big(\mathbb{I},V_{1},\dots,V_{n},X^{\vee}\big)
\!\otimes_{\mathbb{K}}\!\mathrm{Hom}_{\mathcal{C}}(\mathbb{I},X,W_{1},\dots,W_{m})
\!&\to&\mathrm{Hom}_{\mathcal{C}}(\mathbb{I},V_{1},\dots,V_{n},W_{1},\dots,W_{m}),
\\[1ex]
\varphi\otimes_{\mathbb{K}}\psi&\mapsto&\varphi\circ_{X}\psi\coloneqq\mathrm{ev}_{X}
\circ(\varphi\otimes_{\mathbb{K}}\psi).
\end{array}
$$
This gives rise to a pairing: $\mathrm{Hom}_{\mathcal{C}}(\mathbb{I},V_{1},\dots,V_{n}) \otimes_{\mathbb{K}}\mathrm{Hom}_{\mathcal{C}}\big(\mathbb{I},V_{n}^{\vee},\dots,V_{1}^{\vee}\big)\to\mathbb{K}$. It~is nondegenerate due to the nondegeneracy of the evaluation maps.
Hence for any choice of bases $\{\varphi_{\alpha}\}_{\alpha\in A}$
of $\mathrm{Hom}_{\mathcal{C}}(\mathbb{I},V_{1},\dots,V_{n})$, we
define the dual bases $\{\varphi^{\alpha}\}_{\alpha\in A}$ with respect
to this nondegenerate pairing. In the following we will use the following
summation convention\vspace{2ex}
\begin{center}
$\tikzfig{SF4}\displaystyle\coloneqq\sum_{\alpha\in A}\varphi^{\alpha}\otimes_{\mathbb{K}}\varphi_{\alpha}$.
\end{center}

\noindent Such expressions are independent of the choice of bases.

We now introduce the following useful \textit{completeness relation}:
\begin{Proposition}\label{prop:useful}
For any $V_{1},\dots,V_{n}\in\mathcal{C}$, we have
\begin{center}
$\displaystyle\sum_{i\in\mathcal{I}(\mathcal{C})}d_{i}\tikzfig{SF5}=\tikzfig{SF6}$
\end{center}
\end{Proposition}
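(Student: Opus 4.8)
The plan is to derive the completeness relation from semisimplicity, with the coefficient $d_i$ bookkeeping the normalisation of the pairing used to form dual bases. Write $W:=V_1\otimes\cdots\otimes V_n$. Since $\mathcal C$ is finitely semisimple, one has the usual resolution of the identity
$$
\mathrm{id}_W=\sum_{i\in\mathcal I(\mathcal C)}\ \sum_{\alpha}\ e^i_\alpha\circ\hat e^i_\alpha ,
$$
where $\{e^i_\alpha\}_\alpha$ is a basis of $\mathrm{Hom}_{\mathcal C}(X_i,W)$ and $\{\hat e^i_\alpha\}_\alpha\subset\mathrm{Hom}_{\mathcal C}(W,X_i)$ is the basis dual to it for the \emph{composition} pairing $\mathrm{Hom}_{\mathcal C}(W,X_i)\otimes\mathrm{Hom}_{\mathcal C}(X_i,W)\to\mathrm{End}_{\mathcal C}(X_i)\cong\mathbb K$, i.e.\ $\hat e^i_\alpha\circ e^i_\beta=\delta_{\alpha\beta}\,\mathrm{id}_{X_i}$. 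This pairing is nondegenerate by Schur's lemma, and the decomposition is independent of all choices (it is the sum of the isotypic projectors of $W$).

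Next I would transport this identity into the round-coupon calculus set up above. Under the duality isomorphisms (and the strict pivotal convention $(X_i^\vee)^\vee=X_i$) one has $\mathrm{Hom}_{\mathcal C}(X_i,W)\cong\mathrm{Hom}_{\mathcal C}(\mathbb I,V_1,\dots,V_n,X_i^\vee)$ and $\mathrm{Hom}_{\mathcal C}(W,X_i)\cong\mathrm{Hom}_{\mathcal C}(\mathbb I,X_i,V_n^\vee,\dots,V_1^\vee)$, so that $e^i_\alpha$ and $\hat e^i_\alpha$ correspond to coupons $\varphi_\alpha$ and $\varphi^\alpha$ exactly of the type entering the summation convention $\tikzfig{SF4}$, with one extra leg labelled $X_i^\vee$, respectively $X_i$. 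The key point is that the nondegenerate pairing of the excerpt, being a closed diagram $\mathbb I\to\mathbb I$, computes $\langle\varphi^\alpha,\varphi_\beta\rangle=\mathrm{tr}_{X_i}\!\big(\hat e^i_\alpha\circ e^i_\beta\big)=d_i\,\delta_{\alpha\beta}$ on these coupons — a zig-zag argument shows that the contraction amounts to closing $\hat e^i_\alpha\circ e^i_\beta\colon X_i\to X_i$ into a loop. Here sphericality is what makes the loop value $d_i=\dim X_i$ unambiguous, and $d_{\bar i}=d_i$ makes the choice between the $X_i$- and $X_i^\vee$-leg irrelevant. Consequently the coupon dual to $\varphi_\alpha$ for the excerpt's pairing corresponds to $d_i^{-1}\hat e^i_\alpha$; equivalently, substituting $\hat e^i_\alpha=d_i\cdot(\text{excerpt-dual partner of }\varphi_\alpha)$ into the resolution of the identity turns $\sum_\alpha e^i_\alpha\circ\hat e^i_\alpha$ into $d_i$ times the $X_i$-string $\sum_\alpha\varphi^\alpha\circ_{X_i}\varphi_\alpha$, which is precisely the summand drawn on the left-hand side of the Proposition.

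Assembling the two steps gives $\mathrm{id}_W=\sum_{i\in\mathcal I(\mathcal C)}d_i\sum_\alpha(\cdots)$, which is the asserted completeness relation; independence of the bases has already been recorded for the summation convention, so nothing further is needed there. I expect the only real work to be the second step: matching the two graphical conventions and pinning down that the ratio between the composition pairing and the closed-loop pairing of the excerpt is exactly one factor of $d_i$. This is a short but genuine normalisation computation, and it is the one place where sphericality enters essentially.
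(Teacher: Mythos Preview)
The paper does not supply a proof of this proposition; it is stated as a known \emph{completeness relation} and then used throughout. Your argument is the standard one and is correct: semisimplicity gives the resolution of $\mathrm{id}_W$ in terms of composition-dual bases, and the factor $d_i$ is exactly the discrepancy between the composition pairing on $\mathrm{Hom}_{\mathcal C}(W,X_i)\times\mathrm{Hom}_{\mathcal C}(X_i,W)$ and the closed-diagram pairing the paper uses to define dual coupons. One small notational point: in the paper's conventions the symbol $\varphi^\alpha$ is reserved for the dual basis with respect to the closed-diagram pairing, so when you write that $\hat e^i_\alpha$ ``corresponds to'' a coupon $\varphi^\alpha$ you are overloading that symbol; it would be cleaner to give the transported $\hat e^i_\alpha$ a neutral name and then record that the paper's $\varphi^\alpha$ equals $d_i^{-1}$ times it. Otherwise the normalisation computation and the role of sphericality are exactly as you describe.
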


\subsection{The string-net construction}

We now give a brief introduction to the string-net construction. We~refer to \cite{kirillov2011string} for more details, and to \cite{Levin2005}
for motivations from physics.

Let's consider \textit{finite graphs} (i.e., the sets of the vertices
and the edges are both finite) embedded in an oriented surface $\Sigma$, which is not required to be compact and may have non-empty boundary.
For such a graph $\Gamma$, denote by $\mathcal{E}^{\rm or}(\Gamma)$
the set of its \textit{oriented edges} and $\mathcal{V}(\Gamma)$
the set of~its \textit{vertices}. One-valent vertices are called \textit{endings}. We~denote the set of endings of~$\Gamma$ by~$\mathcal{V}^{\rm en}(\Gamma)$,
and define $\mathcal{V}^{\rm in}(\Gamma)\coloneqq\mathcal{V}(\Gamma)\setminus\mathcal{V}^{\rm en}(\Gamma)$. We~require $\Gamma\cap\partial\Sigma=\mathcal{V}^{\rm en}(\Gamma)$. We~will call the edges terminating at endings \textit{legs}. Note that
we don't make a choice of orientations for the edges of~the finite
graphs.
\begin{Definition}
Let $\mathcal{C}$ be a spherical fusion category, $\Sigma$ and $\Gamma$
be as defined above. A $\mathcal{C}$-\textit{coloring} (or simply
\textit{coloring} when there is no ambiguity) of $\Gamma$ is given
by the following data:
\begin{itemize}\itemsep=0pt
\item A map $V\colon\mathcal{E}^{\rm or}(\Gamma)\to\mathrm{Obj}(\mathcal{C})$
such that for every $e\in\mathcal{E}^{\rm or}(\Gamma)$, we have $V(e)=V(\overline{e})^{*}$,
where~$\overline{e}$ is the edge with opposite orientation of $e$.
\item A choice of a vector $\varphi(v)\in\mathrm{Hom}_{\mathcal{C}}(\mathbb{I},V(e_{1}),\dots,V(e_{n}))$
for every $v\in\mathcal{V}^{\rm in}(\Gamma)$, where $e_{1},\dots,e_{n}$
are incident to $v$, taken in clockwise order (when the orientation
of the surface is considered conterclockwise) and with outward orientation.
\end{itemize}
An \textit{isomorphism} $f$ of two colorings $(V,\varphi)$ and $(V',\varphi')$
is a collection of isomorphisms $f_{e}\colon V(e)\xrightarrow{\cong}V'(e)$
that is compatible with $V(e)=V(\overline{e})^{*}$ and such that
$\varphi'(v)=\big({\bigotimes}_{e\in\mathcal{E}^{\rm or}(v)}f_{e}\big)\circ\varphi(v)$,
where~$\mathcal{E}^{\rm or}(v)$ is the set of edges that are incident to the vertex~$v$.
\end{Definition}

Let $B\subset\partial\Sigma$ be a finite collection of points on
$\partial\Sigma$ and $\mathbf{V}\colon B\to\mathrm{Obj}(\mathcal{C})$
a map. A $\mathcal{C}$-colored graph $\Gamma$ with \textit{boundary
value} $\mathbf{V}$ is a colored graph such that $\mathcal{V}^{\rm en}(\Gamma)=B$
and $V(e_{b})=\mathbf{V}(b)$, where $b\in B$ and $e_{b}$ is the
edge incident to $b$ with outgoing orientation. We~define $\mathrm{Graph}(\Sigma,\mathbf{V})$
to be the set of $\mathcal{C}$-colored graphs in $\Sigma$ with boundary
value $\mathbf{V}$, and $\mathrm{VGraph}(\Sigma,\mathbf{V})$ to
be the $\mathbb{K}$-vector space freely generated by $\mathrm{Graph}(\Sigma,\mathbf{V})$.

When $\Sigma$ happens to be a disc $D\subset\mathbb{R}^{2}$, a colored
graph $\Gamma\in$$\mathrm{Graph}(D,\mathbf{V})$ can be naturally
viewed as the graphical representation of some morphism in $\mathcal{C}$.
Indeed, graphical calculus for spherical fusion categories provides
a canonical linear surjection~\cite[Theorem 2.3]{kirillov2011string}
$$
\left\langle -\right\rangle _{D}\colon\ \mathrm{VGraph}(\Sigma,\mathbf{V})\to\mathrm{Hom}_{\mathcal{C}}(\mathbb{I},V(e_{1}),\dots,V(e_{n})),
$$
 where $B=\{b_{1},\dots,b_{n}\}$ and $e_{1},\dots,e_{n}$ are the
corresponding outgoing legs, taken in the clockwise order.

The finite dimensional vector space $\mathrm{Hom}_{\mathcal{C}}(\mathbb{I},V(e_{1}),\dots,V(e_{n}))\cong\mathrm{VGraph}(D,\mathbf{V})/\mathrm{ker}\left\langle -\right\rangle _{D}$
can be viewed as the space of linear combinations of $\mathcal{C}$-colored
graphs with a fixed boundary value, where two combinations are identified
if they represent the same morphism in $\mathcal{C}$ according to
the graphical calculus. The identification in turn allows us to perform
graphical calculus in~this space. This inspires us to use $\mathrm{VGraph}(D,\mathbf{V})/\mathrm{ker}\left\langle -\right\rangle _{D}$
as a local model to define a vector space for an arbitrary oriented
surface $\Sigma$ with a prescribed boundary value $\mathbf{V}$,
so that we can perform graphical calculus locally.
\begin{Definition}
Let $D\subset\Sigma$ be an embedded disc. A \textit{null graph with
respect to $D$} is a linear combination of colored graphs $\boldsymbol{\Gamma}=c_{1}\Gamma_{1}+\cdots+c_{n}\Gamma_{n}\in\mathrm{VGraph}(\Sigma,\mathbf{V})$
such that
\begin{itemize}\itemsep=0pt
\item $\boldsymbol{\mathbf{\Gamma}}$ is transversal to $\partial D$ (i.e.,
no vertex of $\Gamma_{i}$ is on $\partial D$ and the edges of each
$\Gamma_{i}$ intersect with $\partial D$ transversally).
\item All $\Gamma_{i}$ coincide outside of $D$.
\item $\left\langle \boldsymbol{\Gamma}\right\rangle _{D}={\displaystyle \sum_{i}c_{i}\left\langle \Gamma_{i}\cap D\right\rangle _{D}}=0$.
\end{itemize}
Denote by $\mathrm{N}(\Sigma,\mathbf{V})\subset\mathrm{VGraph}(\Sigma,\mathbf{V})$
the subspace spanned by null graphs for all possible embedded disks
$D\subset\Sigma$.
\end{Definition}

\begin{Definition}
Let $\Sigma$ be an oriented surface and
let $\mathbf{V}\colon B\to\mathrm{Obj}(\mathcal{C})$ be a boundary
value. Define the \textit{string-net space} for $(\Sigma,\mathbf{V})$
to be the quotient space
$$
Z_{{\rm SN},\mathcal{C}}(\Sigma,\mathbf{V})\coloneqq\mathrm{VGraph}(\Sigma,\mathbf{V})/\mathrm{N}(\Sigma,\mathbf{V}).
$$
\end{Definition}

As before, we have a linear surjection
$$
\left\langle -\right\rangle _{\Sigma}\colon\ \mathrm{VGraph}(\Sigma,\mathbf{V})\to Z_{{\rm SN},\mathcal{C}}(\Sigma,\mathbf{V}).
$$
 The map has several nice properties. For instance, it is linear in
the colors of vertices and additive with respect to direct sums, isotopic
graphs and graphs with isomorphic colorings have the same image. But
most importantly, it allows us to identify graphs that only differ
by local relations that are encoded by the definition of null graphs.
That is to say, all equations from the graphical calculus for the
spherical fusion category $\mathcal{C}$, e.g., the one from Proposition~\ref{prop:useful}, holds true inside any embedded disc on the surface.

\subsection{Drinfeld center and the extended string-net spaces}

One can associate to any monoidal category $\mathcal{C}$ a braided
monoidal category $Z(\mathcal{C})$, called the Drinfeld center of
$\mathcal{C}$. Recall that the objects of the Drinfeld center $Z(\mathcal{C})$
are given by the pairs $Y=(\dot{Y},\gamma_{Y})$, where $\dot{Y}\in\mathcal{C}$
and $\gamma_{Y}\colon\dot{Y}\otimes-\Rightarrow-\otimes\dot{Y}$ is
a natural isomorphism called the \textit{half-braiding} subjected
to certain conditions. We~use the over-crossing of a green line labeled
by an~object $Y\in Z(\mathcal{C})$ to denote its half-braiding
\begin{center}
$\gamma_{Y;W}:={\tikzfig{E1}}$
\end{center}

The definition of string-net spaces can be modified so that one assign
to each boundary circle an object in the Drinfeld center $Z(\mathcal{C})$. We~now give a working description of the extended string-net spaces
that are relevant to our construction of CFT correlators and refer
to~\cite[Sections~6 and~7]{kirillov2011string} for details.
\begin{Remark}\label{rem:proj}
Let $S^{1}$ be an oriented circle. One can define
a $\mathbb{K}$-linear category $\hat{\mathcal{C}}\big(S^{1}\big)$ whose
objects are oriented circles with finite numbers of points labeled
by objects of $\mathcal{C}$ and whose morphism space between two
such circles are the string-net space on a cylinder with boundary
value induced by the inclusion of the two circles as the boundary
of the cylinder. The composition of morphisms are induced by the stacking
of cylinders and the concatenation of string-nets, see~\cite[Definition~6.1]{kirillov2011string}
for details. For all $Y\in Z(\mathcal{C})$, the following string-net
on a cylinder, considered as a morphism in $\hat{\mathcal{C}}\big(S^{1}\big)$,
is a projector
\begin{center}
$P_Y\coloneqq{\displaystyle \sum_{i\in\mathcal{I}(\mathcal{C})}\frac{d_{i}}{D^{2}}}\tikzfig{DB7}$
\end{center}

\noindent This can be seen by the following calculation in the string-net
space, using the completeness relation in Proposition~\ref{prop:useful}
and the naturality of the half-braiding
$$
P_Y^{2}=\sum_{i,j\in\mathcal{I}(\mathcal{C})}\frac{d_{i}d_{j}}{D^{4}}\tikzfig{DB8}
=\sum_{i,j,k\in\mathcal{I}(\mathcal{C})}\frac{d_{i}d_{j}d_{k}}{D^{4}}\tikzfig{DB9}
=\sum_{j,k\in\mathcal{I}(\mathcal{C})}\frac{d_{j}d_{k}}{D^{4}}\tikzfig{DB10}$$
$$=\sum_{k\in\mathcal{I}(\mathcal{C})}\frac{d_{k}}{D^{2}}\phantom{P}\tikzfig{DB11}=P_Y.
$$

We denote by an unoriented, unlabeled purple line the following morphism
that is sometimes called the \textit{canonical color}, the \textit{Kirby
color}, or the \textit{surgery color}
\begin{center}
$\tikzfig{E2}\coloneqq{\displaystyle \sum_{i\in\mathcal{I}(\mathcal{C})}\frac{d_{i}}{D^{2}}}\tikzfig{E3}
\in\mathrm{End}_{\mathcal{C}}\big({\bigoplus}_{i\in\mathcal{I}(\mathcal{C})}X_{i}\big).$
\end{center}
\end{Remark}

Therefore, the projector $P_{Y}$ can be also expressed as
\begin{center}
$P_Y= \tikzfig{E4}$
\end{center}

We are interested in the case where $\Sigma\cong\Sigma_{n}^{g}$,
here $\Sigma_{n}^{g}$ means a compact oriented surface of~genus~$g$
with $n$ boundary components. Denote by $\big(\Sigma_{n}^{g},Y_{1},\dots,Y_{n}\big)$
a \textit{$Z(\mathcal{C})$-marked surface}, i.e., $\Sigma_{n}^{g}$
together with
\begin{itemize}\itemsep=0pt
\item a numbering of $\pi_{0}(\partial\Sigma)$ with $1,\dots,n$,
\item a choice of a point in each connected component of $\partial\Sigma$,
\item a choice of $n$ objects $Y_{1},\dots,Y_{n}\in Z(\mathcal{C})$.
\end{itemize}
We denote the extended string-net space for the $Z(\mathcal{C})$\textit{-}marked
surface $\Sigma_{n}^{g}$ with this boundary value by $Z_{{\rm SN},\mathcal{C}}\big(\Sigma_{n}^{g},Y_{1},\dots,Y_{n}\big)$.
This is defined to be a subspace of the (unextended) string-net spaces
of $\Sigma_{n}^{g}$ with boundary value given by the underlying objects
of $Y_{1},\dots,Y_{n}$ in $\mathcal{C}$, which is spanned by all
the string-nets with the additional projectors introduced in Remark~\ref{rem:proj} placed near the corresponding boundary circles. For
instance, a generic vector in $Z_{{\rm SN},\mathcal{C}}\big(\Sigma_{3}^{1},Y_{1},Y_{2},Y_{3}\big)$
can be defined by a linear combination of equivalence classes of colored
graphs on $\Sigma_{3}^{1}$ such as
\begin{figure}[h!]
\centering
$\tikzfig{DB12}$
\caption{A generic string-net in $Z_{{\rm SN},\mathcal{C}}(\Sigma_{3}^{1},Y_{1},Y_{2},Y_{3}$).}
\end{figure}

There is a canonical isomorphism
$$
Z_{{\rm SN},\mathcal{C}}\big(\Sigma_{n}^{g},Y_{1},\dots,Y_{n}\big)\cong Z_{{\rm TV},\mathcal{C}}\big(\Sigma_{n}^{g},Y_{1},\dots,Y_{n}\big),
$$
where
$Z_{{\rm TV},\mathcal{C}}\big(\Sigma_{n}^{g},Y_{1},\dots,Y_{n}\big)$ is
the state space for $\big(\Sigma_{n}^{g},Y_{1},\dots,Y_{n}\big)$ in the
extended Turaev--Viro--Bar\-rett--Westbury topological field theory~\cite{kirillov2011string}.
Hence:
\begin{Proposition}
There are isomorphisms
$$
Z_{{\rm SN},\mathcal{C}}\big(\Sigma_{n}^{g},Y_{1},\dots,Y_{n}\big)\cong Z_{{\rm TV},\mathcal{C}}\big(\Sigma_{n}^{g},Y_{1},\dots,Y_{n}\big)\cong\mathrm{Hom}_{Z(\mathcal{C})} \big(\mathbb{I}_{Z(\mathcal{C})},Y_{1}\otimes\cdots\otimes Y_{n}\otimes L_{Z(\mathcal{C})}^{\otimes g}\big)
$$
 that are functorial with respect to the morphisms in $Z(\mathcal{C})$,
where $L_{Z(\mathcal{C})}\coloneqq\bigoplus_{i\in\mathcal{I}(Z(\mathcal{C}))}Z_{i}^{\vee}\otimes Z_{i}$.
\end{Proposition}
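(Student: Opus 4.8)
The plan is to assemble the claimed chain from three ingredients, all essentially available in the literature: the canonical isomorphism $Z_{{\rm SN},\mathcal C}\big(\Sigma_n^g,Y_1,\dots,Y_n\big)\cong Z_{{\rm TV},\mathcal C}\big(\Sigma_n^g,Y_1,\dots,Y_n\big)$ recalled just above \cite{kirillov2011string}; the equivalence between the extended Turaev--Viro TFT of $\mathcal C$ and the extended Reshetikhin--Turaev TFT of the Drinfeld center $Z(\mathcal C)$ \cite{balsam2010turaevviro2,balsam2010turaevviro3,KirillovBalsam2010,turaev2010two}, under which the $Z(\mathcal C)$-marking of the boundary circles on the Turaev--Viro side becomes the coloring of the boundary circles on the Reshetikhin--Turaev side; and the standard pair-of-pants computation of Reshetikhin--Turaev state spaces for the modular tensor category $Z(\mathcal C)$. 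So the only thing one really has to do is the last computation, together with a check that the marking data match up and that everything is natural in $Z(\mathcal C)$.

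I would carry out the Reshetikhin--Turaev computation as follows. The category $Z(\mathcal C)$ is a modular tensor category, so its Reshetikhin--Turaev TFT is defined and satisfies the usual factorization (gluing) axioms. For a sphere with $m$ holes colored $Y_1,\dots,Y_m$, gluing thrice-punctured spheres and using that cutting along a circle and summing over a simple object $Z_i$ of $Z(\mathcal C)$ on the two new (oppositely oriented) boundary circles implements insertion of $\bigoplus_{i\in\mathcal I(Z(\mathcal C))}Z_i^\vee\otimes Z_i$ followed by a partial trace --- i.e., exactly composition of morphisms in $Z(\mathcal C)$ --- one obtains $Z_{{\rm RT},Z(\mathcal C)}\big(\Sigma_m^0,Y_1,\dots,Y_m\big)\cong\mathrm{Hom}_{Z(\mathcal C)}(\mathbb I,Y_1\otimes\cdots\otimes Y_m)$. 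Now $\Sigma_n^g$ is obtained from $\Sigma_{n+2g}^0$ by gluing $g$ of the extra pairs of boundary circles; each such gluing inserts one factor $\bigoplus_{i\in\mathcal I(Z(\mathcal C))}Z_i^\vee\otimes Z_i=L_{Z(\mathcal C)}$ into the tensor product, giving $Z_{{\rm RT},Z(\mathcal C)}\big(\Sigma_n^g,Y_1,\dots,Y_n\big)\cong\mathrm{Hom}_{Z(\mathcal C)}\big(\mathbb I,Y_1\otimes\cdots\otimes Y_n\otimes L_{Z(\mathcal C)}^{\otimes g}\big)$. Composing with the first two isomorphisms yields the assertion.

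For functoriality, a morphism $f_i\colon Y_i\to Y_i'$ in $Z(\mathcal C)$ acts on $Z_{{\rm SN},\mathcal C}$ by inserting $f_i$ on the green line threading the $i$-th boundary circle, inside the Kirby-colored band defining the projector $P_{Y_i}$ of Remark~\ref{rem:proj}. Naturality of the half-braiding shows that the result does not depend on where along that band $f_i$ is inserted and that the operation descends to the string-net quotient, so we get a well-defined linear map $Z_{{\rm SN},\mathcal C}\big(\Sigma_n^g,\dots,Y_i,\dots\big)\to Z_{{\rm SN},\mathcal C}\big(\Sigma_n^g,\dots,Y_i',\dots\big)$. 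Tracing it through the chain of isomorphisms, one sees it becomes post-composition with $\mathrm{id}\otimes\cdots\otimes f_i\otimes\cdots\otimes\mathrm{id}$; this is checked on the building blocks --- the pair-of-pants morphisms and the gluing maps --- where it reduces to the evident naturality of composition in $Z(\mathcal C)$.

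I expect the genuine difficulty to be bookkeeping rather than conceptual: matching the three incarnations of the boundary data and, in particular, verifying that the extended string-net space as defined here --- the subspace of the unextended string-net space spanned by graphs carrying the projectors $P_{Y_i}$, including the $d_i/D^2$ normalizations --- agrees on the nose with the extended Turaev--Viro state space with boundary colors $Y_i$. Closely related is the point that the handle object is a sum over $\mathcal I(Z(\mathcal C))$ and not over $\mathcal I(\mathcal C)$: in string-net language this is the statement that the idempotent completion of the cylinder category $\hat{\mathcal C}\big(S^1\big)$ has simple objects labelled by $Z(\mathcal C)$, so that cutting a non-separating circle produces the completeness relation for $Z(\mathcal C)$ rather than for $\mathcal C$. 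Both are established in \cite{kirillov2011string}, which is the reason I would rely on that reference instead of redoing the argument from scratch.
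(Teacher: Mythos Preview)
Your proposal is correct and follows the same route as the paper: the paper does not give a separate proof but simply states the proposition as a consequence (``Hence:'') of the canonical isomorphism $Z_{{\rm SN},\mathcal C}\cong Z_{{\rm TV},\mathcal C}$ from \cite{kirillov2011string}, together with the equivalence $Z_{{\rm TV},\mathcal C}\cong Z_{{\rm RT},Z(\mathcal C)}$ recalled in the introduction \cite{balsam2010turaevviro2,balsam2010turaevviro3,KirillovBalsam2010,turaev2010two} and the standard identification of Reshetikhin--Turaev state spaces. Your write-up spells out the factorization argument for the RT side and the functoriality check in more detail than the paper does, but the underlying logic is identical.
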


\section{Bulk field correlators}

\subsection{Modular tensor categories\label{subsec:Modular-tensor-categories}}

The categorical ingredient of the string-net construction is a spherical
fusion category $\mathcal{C}$, which is not necessarily braided.
However, for the application to conformal field theories, we need
a~category with the structure of a ribbon fusion category over $\mathbb{C}$
with an additional nondegeneracy property:
\begin{Definition}
A \textit{modular tensor category} $\mathcal{C}$ is a ribbon fusion
category over $\mathbb{C}$ with the braiding being nondegenerate
in the sense that the matrix $(s_{i,j})_{i,j\in\mathcal{I}(\mathcal{C})}$
is invertible, where $s_{i,j}\coloneqq\mathrm{tr}(\beta_{j,i}\circ\beta_{i,j})$.
\end{Definition}

It can be seen from the cyclic symmetry of the categorical trace that
$s_{i,j}=s_{j,i}$. Moreover, one can show that (see, e.g., \cite[Theorem~3.7.1]{MR1797619})
$$
\sum_{k\in\mathcal{I}(\mathcal{C})}s_{i,k}s_{k,j}=D^{2}\delta_{i,\bar{j}}.
$$

For a spherical fusion category $\mathcal{C}$, the Drinfeld center
$Z(\mathcal{C})$ is a modular tensor category.

We are going to use the following lemmas:

\begin{Lemma}
For every $i\in\mathcal{I}(\mathcal{C})$, we have
\begin{center}
$\tikzfig{MTC2}=\dfrac{s_{j,i}}{d_{i}}\;\tikzfig{MTC3} \qquad and \qquad \displaystyle\sum_{j\in\mathcal{I}(\mathcal{C})}d_{j}\;\tikzfig{MTC2}=D^{2}\delta_{0,i}\;\tikzfig{MTC3}$\,.
\end{center}
\end{Lemma}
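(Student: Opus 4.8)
The plan is to reduce both identities to Schur's lemma together with the two properties of the $s$-matrix recorded above, namely $s_{i,j}=s_{j,i}$ and $\sum_{k\in\mathcal{I}(\mathcal{C})}s_{i,k}s_{k,j}=D^{2}\delta_{i,\bar{j}}$; the whole argument takes place inside an embedded disc, where by the definition of null graphs all the moves of graphical calculus in $\mathcal{C}$ are legal.

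First I would treat the left-hand identity. Read through $\langle-\rangle_{D}$, the picture $\mathrm{MTC2}$ is an endomorphism of the simple object $X_{i}$ (the strand being encircled by the $X_{j}$-loop). By Schur's lemma $\mathrm{Hom}_{\mathcal{C}}(X_{i},X_{i})\cong\mathbb{K}$, so it equals $\lambda_{j,i}\,\mathrm{id}_{X_{i}}$ for a unique scalar $\lambda_{j,i}\in\mathbb{K}$. To pin down $\lambda_{j,i}$, close the $X_{i}$-strand into a loop, i.e., apply the categorical trace, which is unambiguous by sphericality. On one side this turns the diagram into the Hopf link colored by $(X_{i},X_{j})$, which by definition evaluates to $\mathrm{tr}(\beta_{j,i}\circ\beta_{i,j})=s_{i,j}$; on the other side $\mathrm{tr}(\lambda_{j,i}\,\mathrm{id}_{X_{i}})=\lambda_{j,i}d_{i}$. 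Hence $\lambda_{j,i}d_{i}=s_{i,j}$. This already forces $d_{i}\neq0$: if $d_{i}=0$ then $s_{i,j}=0$ for every $j$, so the $i$-th row of the $s$-matrix vanishes, contradicting its invertibility. Therefore $\lambda_{j,i}=s_{i,j}/d_{i}=s_{j,i}/d_{i}$, which is the first claimed equation.

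The second identity follows by summing the first over $j\in\mathcal{I}(\mathcal{C})$ against the weights $d_{j}$; this is legitimate since $\langle-\rangle_{\Sigma}$ is linear in the colouring:
\[
\sum_{j\in\mathcal{I}(\mathcal{C})}d_{j}\,\lambda_{j,i}\,\mathrm{id}_{X_{i}}
=\frac{1}{d_{i}}\Bigl(\sum_{j\in\mathcal{I}(\mathcal{C})}d_{j}\,s_{j,i}\Bigr)\mathrm{id}_{X_{i}}.
\]
Since braiding with the monoidal unit is trivial, $d_{j}=\dim(X_{j})=s_{0,j}$, so $\sum_{j}d_{j}s_{j,i}=\sum_{j}s_{0,j}s_{j,i}=D^{2}\delta_{0,\bar{i}}=D^{2}\delta_{0,i}$, the last step because the duality involution on $\mathcal{I}(\mathcal{C})$ fixes $0$ and only $0$. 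As $d_{0}=\dim(\mathbb{I})=1$, the scalar $D^{2}\delta_{0,i}/d_{i}$ equals $D^{2}\delta_{0,i}$, and the left-hand side becomes $D^{2}\delta_{0,i}$ times the bare $X_{i}$-strand $\mathrm{MTC3}$, as desired.

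I do not expect a genuine obstacle here; the content is a routine application of Schur's lemma and the $s$-matrix relations. The only points needing care are conventional: one should check that the orientations and over/under-crossings drawn in $\mathrm{MTC2}$ are the ones for which closing up the $X_{i}$-strand yields precisely $s_{i,j}$ with no spurious twist factor $\theta_{i}^{\pm1}$ (it does, for the standard Hopf-link configuration, and the leftover symmetry $s_{i,j}=s_{j,i}$ absorbs the choice of which strand is ``over''), and that the passage ``endomorphism of $X_{i}$ inside a disc $=\lambda\,\mathrm{id}_{X_{i}}$'' is exactly the local relation sanctioned by the definition of null graphs, so that the equality indeed holds in $Z_{\mathrm{SN},\mathcal{C}}$ and not merely in $\mathrm{Hom}_{\mathcal{C}}$.
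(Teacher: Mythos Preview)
The paper states this lemma without proof (it is a standard fact about modular tensor categories, quoted alongside the $s$-matrix identity $\sum_{k}s_{i,k}s_{k,j}=D^{2}\delta_{i,\bar j}$), so there is no argument in the paper to compare yours against. Your proof is the standard one and is correct: Schur's lemma forces the encircled strand to be a scalar multiple of $\mathrm{id}_{X_i}$, tracing identifies the scalar as $s_{i,j}/d_i$, and summing against $d_j=s_{0,j}$ collapses via the $s$-matrix orthogonality relation.

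One small slip: in the second step you write that the duality involution ``fixes $0$ and only $0$''. The ``only $0$'' part is false in general (there can be self-dual simples other than $\mathbb{I}$) and is in any case irrelevant; what you actually need is just $\bar 0=0$, which gives $\delta_{0,\bar i}=\delta_{\bar 0,\bar i}=\delta_{0,i}$. Also, the framing in terms of null graphs and embedded discs is unnecessary here: the lemma lives in Section~3.1 and is an identity of morphisms in $\mathcal{C}$, prior to any string-net discussion, so the appeal to $\langle-\rangle_D$ and $Z_{\mathrm{SN},\mathcal{C}}$ is a detour (harmless, but it obscures that this is a purely categorical fact).
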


{\samepage\begin{Lemma}
\label{lem:mtc3}For every $X\in\mathcal{C}$, we have
\begin{center}
$\displaystyle\sum_{i\in\mathcal{I}(\mathcal{C})}d_{i}\;\tikzfig{MTC5}=D^{2}\;\tikzfig{MTC6}$
\end{center}

\noindent In particular, for every $i,j\in\mathcal{I}(\mathcal{C})$,
we have
\begin{center}
$\displaystyle\sum_{k\in\mathcal{I}(\mathcal{C})}d_{k}\;\tikzfig{MTC7}
=\dfrac{D^{2}}{d_{i}}\delta_{i,j}\;\tikzfig{MTC8}$
\end{center}

\end{Lemma}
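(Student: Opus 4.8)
The plan is to reduce the first identity to the case of a simple object $X$, where it becomes an instance of the preceding Lemma, and then to obtain the ``in particular'' statement by specialising the first identity. Since $\mathcal C$ is finitely semisimple, write $X\cong\bigoplus_{n\in\mathcal I(\mathcal C)}X_n^{\oplus m_n}$. Both sides of the first identity are additive in the colour of the edge carrying $X$: splitting that edge according to $X=X'\oplus X''$ and absorbing the associated inclusions and projections into the adjacent coupons turns each picture into the sum of the corresponding pictures for $X'$ and for $X''$ (on the left this uses naturality of the braiding, so that the cross-terms vanish; on the right it uses that the $\mathbb I$-isotypic component is additive, since the graphical calculus is additive in direct sums and linear in the coupons). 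Hence it suffices to treat $X=X_n$ with $n\in\mathcal I(\mathcal C)$. For such an $X_n$ the left-hand picture $\tikzfig{MTC5}$ specialises to the picture $\tikzfig{MTC2}$ of the preceding Lemma with its strand coloured $X_n$ and its loop coloured $X_i$, so the second formula of that Lemma reads $\sum_{i\in\mathcal I(\mathcal C)}d_i\,\tikzfig{MTC5}=D^{2}\delta_{0,n}$ times the bare $X_n$-strand. Since $D^{2}\delta_{0,n}$ times the bare $X_n$-strand is exactly $D^{2}$ times the projection of $X_n$ onto its $\mathbb I$-isotypic component, i.e.\ $D^{2}\,\tikzfig{MTC6}$ with $X=X_n$, re-summing over $n$ with the multiplicities $m_n$ gives the first identity for arbitrary $X$.

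For the ``in particular'' part I would apply the first identity to $X=X_i\otimes X_j^{\vee}$, realised as the two antiparallel strands coloured $X_i$ and $X_j$ appearing in $\tikzfig{MTC7}$; then $\sum_{k\in\mathcal I(\mathcal C)}d_k\,\tikzfig{MTC7}$ equals $D^{2}$ times the projection of $X_i\otimes X_j^{\vee}$ onto its $\mathbb I$-isotypic component. By Schur's lemma $\mathrm{Hom}_{\mathcal C}\big(\mathbb I,X_i\otimes X_j^{\vee}\big)\cong\delta_{i,j}\mathbb K$, spanned when $i=j$ by $\mathrm{coev}_{X_i}$; the corresponding dual basis vector with respect to the nondegenerate pairing of Section~\ref{sec:String-net-model} is $\tfrac1{d_i}$ times a coevaluation, because the self-pairing of $\mathrm{coev}_{X_i}$ closes up to the loop value $d_i=\dim(X_i)$. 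Hence that projection equals $\tfrac{\delta_{i,j}}{d_i}$ times the cup--cap diagram $\tikzfig{MTC8}$, which is the asserted formula.

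The computations are routine; the only points that need genuine care are the reduction of the first identity to simple objects---in particular verifying that both sides, and the projection picture $\tikzfig{MTC6}$ in particular, really are additive in the colour of the $X$-edge---and fixing the normalisation $1/d_i$ of the identity-channel projector in the second part. As an alternative for the ``in particular'' statement one can instead expand $\tikzfig{MTC7}$ directly via the completeness relation of Proposition~\ref{prop:useful} and the Hopf-link evaluation of the preceding Lemma, reducing the left-hand side to a sum of products of $S$-matrix entries and then contracting it with $\sum_{k}s_{k,i}\,s_{k,\bar{j}}=D^{2}\delta_{i,j}$ (a consequence of the orthogonality relation $\sum_{k}s_{i,k}s_{k,j}=D^{2}\delta_{i,\bar{j}}$ recalled above); I expect the specialisation route to be the shorter of the two.
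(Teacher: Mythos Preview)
The paper does not supply a proof of Lemma~\ref{lem:mtc3}; it is simply recorded as one of two standard identities in modular tensor categories (alongside the preceding $S$-matrix lemma), and then used later, e.g., in the proof of Proposition~\ref{thm:S}. So there is nothing to compare your argument against at the level of the paper's own reasoning.

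That said, your proposed proof is correct and is the standard one. The reduction to simple $X$ via additivity is valid: the left-hand side is natural in $X$ (it is a sum of double braidings), and the right-hand side is the projector onto the $\mathbb I$-isotypic component, which is additive because the isotypic decomposition is. For $X=X_n$ simple the first identity collapses to the second displayed formula of the preceding lemma, as you say. Your derivation of the ``in particular'' statement by specialising to $X=X_i\otimes X_j^{\vee}$ and reading off the $\mathbb I$-projector as $\tfrac{\delta_{i,j}}{d_i}$ times $\mathrm{coev}_{X_i}\circ\widetilde{\mathrm{ev}}_{X_i}$ is exactly right, including the normalisation. The alternative route you sketch via $S$-matrix orthogonality is also standard and equally short; either would serve as a proof where the paper gives none.
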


}

For a ribbon category $\mathcal{C}$, we denote by $\mathcal{C}^{\rm rev}$
its reverse category, i.e., the same monoidal category with inverse
braiding and twist. There is a canonical braided functor
$$\begin{array}{rcl}
\Xi\colon\ \mathcal{C}^{\rm rev}\boxtimes\mathcal{C}&\to& Z(\mathcal{C}),
\\[1ex]
U\boxtimes V&\mapsto&\big(U\otimes V,\gamma_{U\otimes V}^{\mathrm{uo}}\big),
\end{array}$$
 where
\begin{center}
$\gamma_{U\otimes V;W}^{\mathrm{uo}}\coloneqq\;\scalebox{0.8}{\tikzfig{MTC4}}$
\end{center}

\noindent is the \textit{under-over half-braiding}.

In fact, modularity can be formulated as follows, see, e.g.,~\cite{MR3996323}:
\begin{Proposition}\label{prp:mtc}A ribbon fusion category $\mathcal{C}$ is a modular
tensor category if and only if the canonical functor $\Xi$ is a braided
equivalence.
\end{Proposition}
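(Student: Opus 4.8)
\emph{Proof sketch (plan).}
The plan is to prove both implications, using the structural facts already available: $\Xi$ is a braided tensor functor between ribbon fusion categories over $\mathbb{C}$, the centre $Z(\mathcal{C})$ is modular, and—writing $D^{2}=\sum_{i}d_{i}^{2}$ for the global dimension of $\mathcal{C}$—both $\mathcal{C}^{\mathrm{rev}}\boxtimes\mathcal{C}$ and $Z(\mathcal{C})$ have global dimension $D^{4}$, since $\dim\mathcal{C}^{\mathrm{rev}}=D^{2}$ and $\dim Z(\mathcal{C})=(\dim\mathcal{C})^{2}$.

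For the implication that $\Xi$ being an equivalence makes $\mathcal{C}$ modular, I would transport non-degeneracy along $\Xi$. Non-degeneracy of the braiding is equivalent to triviality of the M\"uger centre, a condition that depends only on the braided structure, so $\mathcal{C}^{\mathrm{rev}}\boxtimes\mathcal{C}\simeq Z(\mathcal{C})$ forces $\mathcal{C}^{\mathrm{rev}}\boxtimes\mathcal{C}$ to be non-degenerate. Since the M\"uger centre of a Deligne product is the Deligne product of the M\"uger centres, and since an object lies in the M\"uger centre of $\mathcal{C}^{\mathrm{rev}}$ exactly when it does in that of $\mathcal{C}$, the M\"uger centre of $\mathcal{C}$ is trivial, i.e., $\mathcal{C}$ is modular.

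For the converse, assume $\mathcal{C}$ is modular, and let $\mathcal{C}_{+}$ and $\mathcal{C}_{-}$ be the essential images of $\Xi$ restricted to $\mathbb{I}\boxtimes\mathcal{C}$ and to $\mathcal{C}^{\mathrm{rev}}\boxtimes\mathbb{I}$. Composing either restriction with the forgetful functor $Z(\mathcal{C})\to\mathcal{C}$—which is faithful and a ribbon functor—gives an equivalence onto $\mathcal{C}$; hence both restrictions are fully faithful, and $\mathcal{C}_{\pm}$ are full fusion subcategories of $Z(\mathcal{C})$, ribbon-equivalent to $\mathcal{C}$ and to $\mathcal{C}^{\mathrm{rev}}$ respectively, so $\dim\mathcal{C}_{+}=\dim\mathcal{C}_{-}=D^{2}$ and $\mathcal{C}_{+}$ is a modular fusion subcategory of $Z(\mathcal{C})$. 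Since the braiding between any $U\boxtimes\mathbb{I}$ and any $\mathbb{I}\boxtimes V$ in $\mathcal{C}^{\mathrm{rev}}\boxtimes\mathcal{C}$ is trivial and $\Xi$ is braided, $\mathcal{C}_{+}$ and $\mathcal{C}_{-}$ centralise each other, so $\mathcal{C}_{-}\subseteq C_{Z(\mathcal{C})}(\mathcal{C}_{+})$. Now I would invoke M\"uger's factorisation theorem: a modular fusion subcategory of a modular category splits off, so the multiplication functor $\mathcal{C}_{+}\boxtimes C_{Z(\mathcal{C})}(\mathcal{C}_{+})\to Z(\mathcal{C})$ is a braided equivalence; comparing global dimensions gives $\dim C_{Z(\mathcal{C})}(\mathcal{C}_{+})=D^{4}/D^{2}=D^{2}=\dim\mathcal{C}_{-}$, whence $C_{Z(\mathcal{C})}(\mathcal{C}_{+})=\mathcal{C}_{-}$. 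Precomposing the resulting braided equivalence $\mathcal{C}_{-}\boxtimes\mathcal{C}_{+}\to Z(\mathcal{C})$, $A\boxtimes B\mapsto A\otimes B$, with $\big(\Xi|_{\mathcal{C}^{\mathrm{rev}}\boxtimes\mathbb{I}}\big)\boxtimes\big(\Xi|_{\mathbb{I}\boxtimes\mathcal{C}}\big)$ and using monoidality of $\Xi$, the composite is isomorphic to $\Xi$; therefore $\Xi$ is an equivalence, and, being braided, a braided equivalence.

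I expect the main obstacle to be the converse direction, where the essential input is M\"uger's factorisation theorem for a modular fusion subcategory of a modular category, together with the global-dimension bookkeeping that pins down $C_{Z(\mathcal{C})}(\mathcal{C}_{+})$; the remaining points (full faithfulness of the two coordinate restrictions of $\Xi$, their mutual centralising, and the final identification of the abstract equivalence with $\Xi$) are formal. A more hands-on alternative, avoiding M\"uger's theorem, would be to prove full faithfulness of $\Xi$ directly: reduce to the simple objects $X_{i}\boxtimes X_{j}$, express $\dim\mathrm{Hom}_{Z(\mathcal{C})}\big(\Xi(X_{i}\boxtimes X_{j}),\Xi(X_{k}\boxtimes X_{l})\big)$ as the rank of the averaging projector onto half-braiding intertwiners (the relative analogue of the projector $P_{Y}$ of Remark~\ref{rem:proj}), and evaluate its trace graphically—collapsing encircling $X_{p}$-loops to scalars $s_{p,\bullet}/d_{\bullet}$ and then using the orthogonality relation $\sum_{k}s_{i,k}s_{k,j}=D^{2}\delta_{i,\bar{j}}$ to reduce the trace to $\delta_{ik}\delta_{jl}$; once $\Xi$ is fully faithful its essential image is a fusion subcategory of $Z(\mathcal{C})$ of global dimension $D^{4}=\dim Z(\mathcal{C})$, hence all of it.
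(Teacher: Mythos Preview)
The paper does not actually prove this proposition: it is stated with a pointer to the literature (``see, e.g.,~\cite{MR3996323}'') and no argument is given. So there is no ``paper's own proof'' to compare against.

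Your sketch is correct and is essentially the standard (M\"uger-style) argument. The forward direction is fine: triviality of the M\"uger centre is preserved under braided equivalence, $Z(\mathcal{C})$ is modular for any spherical fusion $\mathcal{C}$, and the M\"uger centre of a Deligne product is the Deligne product of the M\"uger centres, so $\mathcal{C}$ is non-degenerate. For the converse, your use of the forgetful functor to deduce full faithfulness of the two coordinate embeddings is clean (faithful $G$ plus fully faithful $G\circ F$ does force $F$ fully faithful), and the rest---mutual centralising, M\"uger's splitting $\mathcal{C}_{+}\boxtimes C_{Z(\mathcal{C})}(\mathcal{C}_{+})\simeq Z(\mathcal{C})$, the dimension count $D^{4}/D^{2}=D^{2}$ pinning down $C_{Z(\mathcal{C})}(\mathcal{C}_{+})=\mathcal{C}_{-}$, and the identification of the resulting equivalence with $\Xi$---is exactly how the classical proof goes.

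Your alternative route, computing $\dim\mathrm{Hom}_{Z(\mathcal{C})}\big(\Xi(X_{i}\boxtimes X_{j}),\Xi(X_{k}\boxtimes X_{l})\big)$ as the trace of the averaging projector and collapsing it via the $s$-matrix orthogonality $\sum_{k}s_{i,k}s_{k,j}=D^{2}\delta_{i,\bar{j}}$, also works and is arguably more in keeping with the graphical flavour of the rest of the paper (it uses precisely Lemma~3.2 and the projector $P_{Y}$ of Remark~\ref{rem:proj}); once $\Xi$ is fully faithful, the dimension comparison $\dim(\mathcal{C}^{\mathrm{rev}}\boxtimes\mathcal{C})=D^{4}=\dim Z(\mathcal{C})$ finishes it without invoking M\"uger's factorisation theorem. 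Either version would be a perfectly acceptable self-contained replacement for the bare citation.
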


\subsection{Consistent systems of correlators} \label{subsec:CSOC}

We now give a summary of the concept of consistent systems of CFT
bulk field correlators in~the form used in \cite{MR3590526}.

An \textit{extended surface} $\Sigma$ is an oriented surface with
a partition of the boundary components into ingoing and outgoing parts,
i.e., $\partial\Sigma=\partial_{\rm in}\Sigma\sqcup\partial_{\rm out}\Sigma$
and a marked point for each boundary component. We~denote by $\Sigma_{p\mid q}^{g}$
an extended surface of genus $g$ with $p$ ingoing boun\-dary components
and $q$ outgoing boundary components.
\begin{Definition}
The \textit{mapping class group} $\mathrm{Map}(\Sigma)$ of an extended
surface $\Sigma$ is the group of~homo\-topy classes of orientation
preserving homeomorphisms $\Sigma\to\Sigma$ that map $\partial_{\rm in}\Sigma$
to itself (hence also $\partial_{\rm out}\Sigma$ to itself) and map marked
points to marked points.
\end{Definition}

Along with the action of the mapping class group on an extended surface,
we also consider the \textit{sewing} of the surface: A sewing $s_{\alpha,\beta}$
along $(\alpha,\beta)\in\pi_{0}(\partial_{\rm in}\Sigma)\times\pi_{0}(\partial_{\rm out}\Sigma)$
gives us a new extended surface $s_{\alpha,\beta}(\Sigma)\coloneqq\cup_{\alpha,\beta}\Sigma$
by identifying the boundary component $\partial_{\alpha}\Sigma$ with
$\partial_{\beta}\Sigma$ via an orientation preserving homeomorphism
$f\colon\partial_{\alpha}\Sigma\to\partial_{\beta}\Sigma$ that maps
the marked point on $\partial_{\alpha}\Sigma$ to the marked point
on $\partial_{\beta}\Sigma$. The resulting surface is independent
of $f$ up to homeomorphisms.
\begin{Definition}
The category $\mathcal{S}\text{urf}$ is the symmetric monoidal category
having extended sur\-faces~$\Sigma$ as objects and the pairs $(\varphi,s_{\alpha,\beta})$
as morphisms $\Sigma\to\cup_{\alpha,\beta}\Sigma$, where $\varphi\in\mathrm{Map}(\Sigma)$
is a~mapping class and $s_{\alpha,\beta}$ a sewing. The monoidal
product is given by the disjoint union.
\end{Definition}

In order to describe the composition of the morphisms in the category
$\mathcal{S}\text{urf}$, we need the relations among the pairs of
mapping classes and sewing. Such relations are discussed in detail
in \cite{MR1752293}.

In a local two-dimensional conformal field theory, specific spaces
of conformal blocks for bulk fields can be constructed as the morphism
spaces in a braided monoidal category $\mathcal{D}$ involving a fixed
object $F\in\mathcal{D}$. The object $F\in\mathcal{D}$ should be
considered as the space of bulk fields. We~say that the CFT has the
\textit{monodromy data} based on $\mathcal{D}$ and the \textit{bulk
object} $F$. The reader is invited to think of $\mathcal{D}$ as
the representation category of both left moving and right moving chiral
symmetries. The collection of all bulk fields transforms in a representation
of this symmetry which also determines the monodromy data of the theory
like braiding and fusing matrices. Hence we say that the CFT has the
monodromy data based on $\mathcal{D}$ and the bulk object $F$ describing
all bulk fields.

Since we are interested in correlators of bulk fields, we consider
conformal blocks that are based on the modular tensor category $\mathcal{D}=\mathcal{C}^{\rm rev}\boxtimes\mathcal{C}$
for a modular tensor category $\mathcal{C}$ (correlators of bulk
fields are obtained by combining conformal blocks for left movers
with those for right movers). Because of Proposition~\ref{prp:mtc},
we can replace $\mathcal{D}=\mathcal{C}^{\rm rev}\boxtimes\mathcal{C}$
with the Drinfeld center $Z(\mathcal{C})$ and therefore apply the
Turaev--Viro--Barrett--Westbury state-sum construction, or equivalently,
the string-net model described in Section~\ref{sec:String-net-model}.

We therefore define the \textit{pinned block functor}
$$
\mathrm{Bl}^{F}\colon\ \mathcal{S}\text{urf}\to\mathcal{V}\text{ect}_{\mathbb{C}}
$$
 by assigning to the extended surface $\Sigma_{p\mid q}^{g}$ the
finite dimensional vector space
$$
\mathrm{Bl}^{F}\big(\Sigma_{p\mid q}^{g}\big):= Z_{{\rm SN},\mathcal{C}}\big(\Sigma_{p+q}^{g},\underbrace{F^{\vee},\dots,F^{\vee}}_{p},
\underbrace{F,\dots,F}_{q}\big)
\cong\mathrm{Hom}_{Z(\mathcal{C})}\big(\mathbb{I}_{Z(\mathcal{C})},(F^{\vee})^{\otimes p}\otimes F^{\otimes q}\otimes L_{Z(\mathcal{C})}^{\otimes g}\big),
$$
 and to a morphism $(\varphi,s)$ between extended surfaces the natural
action of the mapping class $\varphi$ followed by the concatenation
of the string-net induced by the sewing $s$.

As an auxiliary datum, we also define the \textit{trivial block functor}
$\Delta_{\mathbb{C}}\colon S\text{urf}\to\mathcal{V}\text{ect}_{\mathbb{C}}$
by~assig\-ning to every extended surface the vector space $\mathbb{C}$
and to every morphism the identity $\mathrm{id}_{\mathbb{C}}$. \textit{A~consistent system of bulk field correlators} is then a monoidal natural
transformation
$$
v^{F}\colon\ \Delta_{\mathbb{C}}\to\mathrm{Bl}^{F}
$$
 such that $v^{F}\big(\Sigma_{1\mid1}^{0}\big)\coloneqq\big(v^{F}\big)_{\Sigma_{1\mid1}^{0}}(1)\in\mathrm{Bl}^{F}
 \big(\Sigma_{1\mid1}^{0}\big)\cong\mathrm{End}_{Z(\mathcal{C})}(F)$
is invertible.

Unpacking the rather compact definition above, we see that the so
defined consistent system of bulk field correlators amounts to a choice
of a vector
$$
v^{F}\big(\Sigma_{p\mid q}^{g}\big)\coloneqq\big(v^{F}\big)_{\Sigma_{p\mid q}^{g}}(1)\in\mathrm{Bl}^{F}\big(\Sigma_{p\mid q}^{g}\big)
$$
 for each extended surface $\Sigma_{p\mid q}^{g}$ that is invariant
under the action of the mapping class group $\mathrm{Map}\big(\Sigma_{p\mid q}^{g}\big)$,
such that the linear map induced by a sewing maps the chosen vector
to the chosen vector for the sewn surface.

It is shown~\cite[Theorem 4.10]{MR3590526} that for a (not necessarily
semisimple) modular finite category $\mathcal{D}$, the consistent
systems of bulk field correlators with monodromy data based on $\mathcal{D}$
and with bulk object $F\in\mathcal{D}$ are in bijection with structures
of a \textit{modular Frobenius algebra}~\cite[Definition~4.9]{MR3590526}
on $F$.

\subsection{Fundamental correlators via string-nets\label{subsec:Fund}}

Let $F\in Z(\mathcal{C})$ be a commutative symmetric Frobenius algebra. We~define the following string-nets on the set of surfaces that generates
all extended surfaces by sewing
\begin{center}
$v_{1\mid0}^{F}\coloneqq\tikzfig{CS1}\qquad\qquad
v_{0\mid1}^{F}\coloneqq\tikzfig{CS2}\qquad\qquad
v_{1\mid1}^{F}\coloneqq\tikzfig{MC1}$
\end{center}

\begin{center}
$v_{2\mid1}^{F}\coloneqq\tikzfig{MP1}\qquad\qquad
v_{1\mid2}^{F}\coloneqq\tikzfig{MP2}$
\end{center}

\noindent Here the vertices are given by the coproduct of the Frobenius algebra~$F$. A priori, our prescription
depends on the isotopy classes of the embeddings of the string diagrams
into the surfaces. However, by using the following \textit{cloaking
relation}, we can show that the string-nets above are in fact well
defined:
\begin{Lemma}
\label{lem:cloaking}Let $\Sigma$ be a compact oriented surface,
$\mathbf{V}$ a boundary value, and $X,Y\in Z(\mathcal{C})$. We~have
the following equation in the string-net space $Z_{{\rm SN},\mathcal{C}}(\Sigma,\mathbf{V})$
\begin{center}
$\tikzfig{CL1}\;=\;\tikzfig{CL2}$
\end{center}

\noindent here it is understood that the string-nets agree outside
the depicted region, and the shaded area may include boundary components.
\end{Lemma}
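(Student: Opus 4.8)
The plan is to reduce the statement to a purely local computation inside an embedded disc, where the defining relations of null graphs and the graphical calculus of the spherical fusion category $\mathcal{C}$ apply. First I would observe that on the left-hand side a green $X$-labelled line encircles some portion of the string-net, crossing over a collection of strands that pass through the encircled region; on the right-hand side that same circle has been removed. Since the string-nets agree outside the depicted region, it suffices to show that, as an element of the string-net space, the encircling loop labelled by $X\in Z(\mathcal C)$ acts as the identity on whatever bundle of strands it surrounds. The key point is that $X$ lives in the Drinfeld center, so the crossings are genuine half-braidings $\gamma_{X;-}$, and a closed loop of half-braiding around a coherent bundle of strands is, by the hexagon/naturality axioms for the half-braiding, equal to the half-braiding of $X$ past the single object obtained by fusing that bundle — i.e.\ past the identity strand once one contracts the bundle to a point via a disc. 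Concretely, inside a disc containing the encircled subdiagram we may, using the graphical calculus (Theorem 2.3 of \cite{kirillov2011string}), replace the encircled subdiagram by a single coupon $\varphi\in\mathrm{Hom}_{\mathcal C}(\mathbb{I}, W_1,\dots,W_m)$ with all legs leaving the disc; the loop of $X$ then encircles only these $m$ outgoing legs.

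Next I would perform the core calculation in that disc. Encircling the legs $W_1,\dots,W_m$ by a green $X$-loop is, by iterated use of the half-braiding crossings $\gamma_{X;W_j}$ and their naturality, the same as applying $\gamma_{X;\,W_1\otimes\cdots\otimes W_m}$ and then $\gamma^{-1}$ back, composed with the coupon; but because the loop is closed (the $X$-strand forms a contractible circle bounding a disc on the side where no other strands pass), the net effect on the coupon $\varphi$ is precisely the statement that $\varphi$ is a morphism in $\mathcal C$ and $\gamma_X$ is a natural transformation out of $\mathbb{I}$ — so $\gamma_{X;\mathbb I}=\mathrm{id}$ forces the loop to be removable. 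In more elementary terms: slide the $X$-loop, using isotopy invariance of $\langle-\rangle_\Sigma$ together with the over-crossing convention for $\gamma_X$, until it bounds a trivial disc containing no vertices and no other strands; then the null-graph relation for that disc (the loop evaluates to $\mathrm{id}$ on the empty boundary value, since $\langle \text{empty circle}\rangle_D = \mathrm{id}_{\mathbb I}$ after using that $X$ passes over nothing) identifies it with the empty graph. This is exactly the computation already carried out for $P_Y^2=P_Y$ in Remark \ref{rem:proj}, where naturality of the half-braiding and the completeness relation of Proposition \ref{prop:useful} are used in the same spirit.

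Finally I would address the one genuine subtlety, which is the main obstacle: the shaded region in the figure may contain boundary components of $\Sigma$, so the encircled subdiagram need not be contractible to a point inside a single disc — the $X$-loop may be a non-contractible curve on $\Sigma$. Here the argument is topological rather than algebraic. One uses that the extended string-net space carries the projectors $P_{Y_k}$ near each boundary circle, and that $P_{Y_k}$ is built from exactly such an encircling Kirby-coloured loop together with a green $Y_k$-strand; the relation to be proved is then that an $X$-loop (any $X\in Z(\mathcal C)$) can be absorbed into or commuted past this structure. This is handled by the standard "handle slide / encircling is central" move in the Turaev–Viro / string-net TFT: via the isomorphism $Z_{\mathrm{SN},\mathcal C}\cong Z_{\mathrm{TV},\mathcal C}$ such an encircling Wilson loop labelled by an object of the center is isotopic, inside the mapping cylinder, to one that bounds a disc, precisely because objects of $Z(\mathcal C)$ are the ones that braid transparently with the Kirby colour. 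I expect the bulk of the write-up to be a careful figure-by-figure verification that the isotopy used does not pass the $X$-loop through any coupon in a way that is not already justified by naturality of $\gamma_X$, and that when the shaded region carries boundary the loop can indeed be isotoped off; once that is granted, the removal of the loop is immediate from $\gamma_{X;\mathbb I}=\mathrm{id}$ and the null-graph relation for a trivial disc.
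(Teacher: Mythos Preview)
You have misread the statement. The cloaking relation does \emph{not} assert that a closed green $X$-loop may be deleted. In both CL1 and CL2 the purple Kirby-coloured circle encircling the shaded region is present; what changes between the two sides is on which side of that purple circle a green strand (labelled by an object of $Z(\mathcal{C})$) sits. The lemma says that such a strand can be pushed through the Kirby circle, irrespective of what the shaded region contains. Your proposed mechanism---slide an $X$-loop until it bounds an empty disc and then invoke $\gamma_{X;\mathbb I}=\mathrm{id}$---is aimed at the wrong target; note also that an unknotted $X$-loop bounding an empty disc evaluates to $\dim(\dot X)$, not to $1$, so even on its own terms that step is incorrect.

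The paper's actual proof is a two-line local computation that sidesteps entirely the topological difficulty you flag in your last paragraph. One works in a disc neighbourhood of the single point where the green strand meets the purple circle; such a disc exists no matter what the shaded region contains (boundary components, handles, anything). Apply the completeness relation of Proposition~\ref{prop:useful} to the purple circle at that point: the Kirby sum $\sum_i d_i/D^2$ is replaced by a pair of dual coupons joined by a new $X_j$-edge, introducing an extra sum $\sum_j d_j$. The green-over-purple crossing has now become a crossing of the green strand over a coupon, and naturality of the half-braiding lets the green strand slide to either side of that coupon. Starting from CL1 and from CL2 this yields the \emph{same} expression (the figure CL3 with $\sum_{i,j}d_id_j/D^2$ in the paper). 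No global isotopy, no contractibility hypothesis on the shaded region, and no appeal to the Turaev--Viro comparison are required.
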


\begin{proof}
By using Proposition~\ref{prop:useful} and the naturality of the
half-braiding, we can see that both sides are equal to
\begin{center}
${\displaystyle \sum_{i,j\in\mathcal{I}(\mathcal{C})}}\dfrac{d_{i}d_{j}}{D^{2}}\tikzfig{CL3}$
\end{center}
\vspace{-10mm}
\end{proof}
\begin{Lemma}
\label{lem:F}
For a commutative symmetric Frobenius algebra $F\in Z(\mathcal{C})$,
$v_{1\mid0}^{F}$, $v_{0\mid1}^{F}$, $v_{1\mid1}^{F}$, $v_{2\mid1}^{F}$,
and $v_{1\mid2}^{F}$ are invariant under the mapping class groups.
\end{Lemma}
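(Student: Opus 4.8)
The plan is to verify mapping-class-group invariance separately on the five generating surfaces $\Sigma_{1\mid0}^{0}$, $\Sigma_{0\mid1}^{0}$, $\Sigma_{1\mid1}^{0}$, $\Sigma_{2\mid1}^{0}$, $\Sigma_{1\mid2}^{0}$, reducing in each case to a finite set of generators of the relevant mapping class group and checking that each generator acts trivially on the prescribed string-net. For the disc-type surfaces $\Sigma_{1\mid0}^{0}$ and $\Sigma_{0\mid1}^{0}$ (a disc with one boundary circle carrying $F^{\vee}$, resp.\ $F$, plus the boundary projector $P_{F}$), the mapping class group is generated by the Dehn twist along the boundary-parallel circle; I would show this twist acts as the identity because the boundary projector $P_{Y}$ absorbs the twist — this is exactly the statement, provable from Lemma~\ref{lem:mtc3} and the definition of $P_Y$, that $P_Y$ composed with the twisted cylinder equals $P_Y$, using that the underlying string-net (the unit $\eta$, resp.\ counit $\varepsilon$) is just a single leg into the coupon. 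For $\Sigma_{1\mid1}^{0}$, an annulus, $\mathrm{Map}$ is generated by the Dehn twist along the core circle; invariance follows again from the projector-absorption property together with the fact that the identity string-net $\mathrm{id}_F$ (a single $F$-line through the annulus) is cloaking-insensitive.

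The substantive cases are the two pairs of pants $\Sigma_{2\mid1}^{0}$ and $\Sigma_{1\mid2}^{0}$. Their mapping class groups are free abelian of rank three, generated by the Dehn twists along the three boundary-parallel circles; by the $\mathbb{Z}/3$-type symmetry of the pair of pants (cyclically permuting the boundary circles) and the duality relating $\Sigma_{2\mid1}$ to $\Sigma_{1\mid2}$, it suffices to check one boundary twist on one of the two surfaces. I would place the prescribed string-net — the coproduct coupon $\Delta_F$ with three $F$-legs running to the three boundary circles, each decorated with its boundary projector — and show that a Dehn twist along, say, the first boundary circle is absorbed. Concretely: the twist pushes a full loop of the half-braiding of $F$ around the $F$-leg as it passes through the annular collar of that boundary; this extra half-braiding loop is precisely the ``cloaking'' move of Lemma~\ref{lem:cloaking} applied with $X=Y=F$, combined with the projector-absorption identity for $P_F$. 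So the twisted string-net is isotopic (inside the surface, using null-graph relations in embedded discs) to the original one. One must also check the twist that is \emph{not} boundary-parallel in the sense of isotoping past the coupon — but in a pair of pants every simple closed curve is boundary-parallel, so no further generator appears, and commutativity/symmetry of $F$ (needed so that $\Delta_F$ does not care about the cyclic ordering of its three legs, matching the $z^n=\mathrm{id}$ identity for the coupon) closes the argument.

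I would organize the write-up as: (i) a lemma stating that for any $Y\in Z(\mathcal{C})$ the boundary projector $P_Y$ absorbs the boundary Dehn twist, i.e.\ $t_{\partial}\cdot P_Y = P_Y$ in $\hat{\mathcal{C}}(S^1)$, proved by the same completeness-relation computation as in Remark~\ref{rem:proj} with one twist inserted; (ii) the disc and annulus cases, immediate from (i); (iii) the pair-of-pants cases, reduced by symmetry to a single boundary twist and then settled by (i) plus Lemma~\ref{lem:cloaking} to move the excess half-braiding of $F$ across the $\Delta_F$-coupon. The main obstacle I anticipate is bookkeeping in step (iii): making precise how the Dehn twist, which a priori only acts on isotopy classes of embedded graphs, manifests as a concrete half-braiding insertion near the boundary coupon, and checking that the isotopy used to slide it onto the projector stays within embedded discs so that only null-graph relations (not global moves) are invoked. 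The commutative–symmetric hypotheses on $F$ enter exactly here — they guarantee the $\Delta_F$-coupon is invariant under the cyclic rotation $z$ of its legs and under reflection, which is what lets the twisted picture be identified with the original rather than with a reordered variant.
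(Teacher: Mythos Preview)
Your proposal has two genuine gaps.

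First, your proposed lemma~(i) --- that $t_{\partial}\cdot P_{Y}=P_{Y}$ for \emph{every} $Y\in Z(\mathcal{C})$ --- is false. A Dehn twist along the core of the cylinder carrying $P_{Y}$ inserts the ribbon twist $\theta_{Y}$ on the $Y$-strand; cloaking lets you slide the Kirby loop past the resulting curl but does not kill it. The identity $t_{\partial}\cdot P_{Y}=P_{Y}$ holds precisely when $\theta_{Y}=\mathrm{id}_{Y}$. The paper handles the cylinder $\Sigma_{1\mid1}^{0}$ exactly this way: after cloaking one is left with $\theta_{F}$, and one then invokes the fact (cited from \cite[Proposition~2.25]{MR2187404}) that a commutative symmetric Frobenius algebra in a ribbon category has trivial twist. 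Your write-up never uses this consequence of the hypotheses, so the annulus case and every boundary-twist check on the pants would remain open. (For the discs the point is moot: $\mathrm{Map}\big(\Sigma_{1\mid0}^{0}\big)$ and $\mathrm{Map}\big(\Sigma_{0\mid1}^{0}\big)$ are trivial, as the paper notes, so no argument is needed there.)

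Second, your description of $\mathrm{Map}\big(\Sigma_{2\mid1}^{0}\big)$ is incomplete. It is true that every simple closed curve on a three-holed sphere is boundary-parallel, so the Dehn-twist subgroup is generated by the three boundary twists; but under the paper's definition a mapping class may permute the two \emph{ingoing} boundary circles, and the half-twist exchanging them --- the $B$-move --- is not a product of boundary Dehn twists. The paper's proof checks the $B$-move explicitly: after two applications of cloaking (Lemma~\ref{lem:cloaking}) the $B$-image of $v_{2\mid1}^{F}$ becomes the product $\mu_{F}$ precomposed with the braiding of its two inputs, and \emph{commutativity} of $F$ (together with symmetry) returns this to $\mu_{F}$. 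This is where the commutativity hypothesis is actually consumed; your account, which locates commutativity/symmetry only in the cyclic symmetry of a single coupon, does not cover it. The case $v_{1\mid2}^{F}$ is dual, using cocommutativity.
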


\begin{proof}
The cases of $v_{1\mid0}^{F}$, $v_{0\mid1}^{F}$ are trivial, since
the mapping class group of a disc is trivial.

The mapping class group of a cylinder is generated by a Dehn twist.
By doing a Dehn twist~$T$ along the projector on the cylinder, we
have:
\begin{center}
$Tv_{1\mid1}^{F}\coloneqq\;\tikzfig{C1}\;=\;\tikzfig{C3}$
\end{center}

\noindent Here we have used the cloaking relation in Proposition~\ref{lem:cloaking}. Since being commutative and symmetric implies
that $F$ has trivial twist~\cite[Proposition 2.25]{MR2187404}, we
have $Tv_{1\mid1}^{F}=v_{1\mid1}^{F}$.

Now consider the so-called $B$ move on a pair of pants with two ingoing
boundary components, we have
\begin{center}
$Bv_{2\mid1}^{F}:=\tikzfig{MP3}=\tikzfig{M1}=\tikzfig{M2}=v_{2\mid1}^{F}$.
\end{center}

\noindent Here we have used the cloaking relation twice on the line
ending on the lower right circle and then the commutativity and symmetry
of $F$. Since the $B$ move and the Dehn twists along the boundary
circles generate the mapping class group, we have shown the invariance.
The~inva\-riance of $v_{1\mid2}^{F}$ can be shown in a similar way
by using cocommutativity of $F$.
\end{proof}
By setting $v^{F}\big(\Sigma_{1\mid0}^{0}\big)=v_{1\mid0}^{F}$, $v^{F}\big(\Sigma_{0\mid1}^{0}\big)=v_{0\mid1}^{F}$,
$v^{F}\big(\Sigma_{1\mid1}^{0}\big)=v_{1\mid1}^{F}$, $v^{F}\big(\Sigma_{2\mid1}^{0}\big)=v_{2\mid1}^{F}$,
$v^{F}\big(\Sigma_{1\mid2}^{0}\big)=v_{1\mid2}^{F}$, we can extend our prescription
to a consistent system of correlators via sewing, provided that the
string-net we get on the torus with one boundary component is invariant
under the action of the mapping class group. The argument is essentially
the same as the one used in~\cite{MR3590526}, i.e., via considering
the Lego--Teichm\"uller game~\cite{MR1780935}. Notice that the consistency
regarding surfaces of genus zero follows purely from the fact that~$F$ is a commutative symmetric Frobenius algebra in~$Z(\mathcal{C})$
and cloaking. For instance, since we can move the projectors around
by using the cloaking relation, the Frobenius property implies
\begin{center}
$v^{F}\big(\Sigma_{2\mid2}^{0}\big):=\scalebox{0.65}{\text{\LARGE{\tikzfig{CS7}}}}
=\scalebox{0.65}{\text{\LARGE{\tikzfig{CS8}}}}=\scalebox{0.65}{\text{\LARGE{\tikzfig{CS9}}}}$
\end{center}

It was shown in~\cite[Lemma~6.6]{traube2020cardy} that the condition
of $S$-invariance of the string-net on the torus with one boundary
circle corresponds to the $S$-invariance condition in~\cite[Lemma~3.2]{MR2551797},
which is equivalent to the modularity condition of the Frobenius algebra
given in~\cite[Definition~4.9]{MR3590526} in the semi-simple cases.
The surprising result~\cite[Theorem~3.4]{MR2551797} states that a
haploid commutative symmetric Frobenius algebra $F\in Z(\mathcal{C})$
is modular if and only if $\dim (F)=D^{2}$.

\section{The Cardy case}\label{sec:Cardy-case}

\subsection{The bulk algebra for the Cardy case}\label{subsec:Cardy}

So far, we have been working with a general commutative symmetric
Frobenius algebra. We~now consider a specific Frobenius algebra, which
is the algebra of the bulk fields in the Cardy case.

We equip the object
$$
L\coloneqq\bigoplus_{i\in\mathcal{I}(\mathcal{C})}X_{i}^{\vee}\otimes X_{i}
$$
with the under-over half-braiding $\gamma_{L}^{\rm uo}$ introduced at
the end of Section~\ref{subsec:Modular-tensor-categories} and
denote $F_{1}:=\big(L,\gamma_{L}^{\mathrm{uo}}\big)\in Z(\mathcal{C})$
in the following.

We next recall that it has a natural Frobenius algebra structure in
$Z(\mathcal{C})$, see a review~\cite[Section~2.2]{MR3797694}
and references therein.
\begin{Proposition}\label{Prop:frob}
$\big(F_{1},\mu_{F_{1}},\eta_{F_{1}},\Delta_{F_{1}},\varepsilon_{F_{1}}\big)$
is a commutative, symmetric Frobenius algebra in $Z(\mathcal{C})$,
where
\begin{center}
$\mu_{F_{1}}\coloneqq\bigoplus_{i,j,k\in\mathcal{I}(\mathcal{C})}d_{k}\scalebox{0.8}{\tikzfig{FS2}} \qquad \Delta_{F_{1}}\coloneqq\bigoplus_{i,j,k\in\mathcal{I}(\mathcal{C})}\dfrac{d_{j}d_{k}}{D^{2}}
\scalebox{0.8}{\tikzfig{FS6}}$
\end{center}
\begin{center}
$\eta_{F_{1}}\coloneqq\bigoplus_{i\in\mathcal{I}(\mathcal{C})}\delta_{0,i}\scalebox{0.8}{\tikzfig{HS7}}
\qquad \varepsilon_{F_{1}}\coloneqq\bigoplus_{i\in\mathcal{I}(\mathcal{C})}D^2\delta_{0,i}
\scalebox{0.8}{\tikzfig{HS11}}$
\end{center}
\end{Proposition}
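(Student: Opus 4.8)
The plan is to recognise $F_1$ as the well-known \emph{canonical algebra} attached to the fusion category $\mathcal C$, and then to reduce the proof to verifying that the explicit normalisations displayed above are the correct ones. Concretely, under the braided equivalence $\Xi$ of Proposition~\ref{prp:mtc}, the pair $F_1=\big(L,\gamma_L^{\mathrm{uo}}\big)$ is the image of $\bigoplus_{i\in\mathcal I(\mathcal C)}X_i^\vee\boxtimes X_i\in\mathcal C^{\mathrm{rev}}\boxtimes\mathcal C$; equivalently, up to isomorphism, it is the value at the tensor unit of the induction functor from $\mathcal C$ to $Z(\mathcal C)$. That this object carries a commutative symmetric Frobenius structure is classical (see the review \cite{MR3797694} and the references there). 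Thus the actual content is twofold: (i) the morphisms $\mu_{F_1}$, $\eta_{F_1}$, $\Delta_{F_1}$, $\varepsilon_{F_1}$ displayed above indeed take values in $Z(\mathcal C)$, i.e.\ they intertwine the relevant half-braidings; and (ii) with the stated coefficients they satisfy the (co)algebra axioms, the Frobenius relation, commutativity, cocommutativity, and symmetry. It is in (ii) that the powers of $D$ and the $d$-factors have to be pinned down, and that is where the only real work lies.

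For (i): each component of these morphisms is a morphism of $\mathcal C$ assembled from (co)evaluations and trivalent couplings among the $X_i$, and I would check equivariance by dragging a half-braiding strand through the coupons, using naturality of the braiding together with the explicit ``under-over'' form recalled at the end of Section~\ref{subsec:Modular-tensor-categories}; the half-braiding $\gamma_L^{\mathrm{uo}}$ is tailored precisely so that composition-type maps such as $\mu_{F_1}$ and their duals become morphisms in the center. For (ii): associativity and the unit axiom for $(\mu_{F_1},\eta_{F_1})$, and dually coassociativity and the counit axiom for $(\Delta_{F_1},\varepsilon_{F_1})$, are string-diagram identities in which the internal sum over $\mathcal I(\mathcal C)$ produced by composing two coupons is collapsed by the completeness relation of Proposition~\ref{prop:useful} (together with Lemma~\ref{lem:mtc3}); this is exactly the step that turns the product of two $d_k$'s arising from two copies of $\mu_{F_1}$ back into a single $d_k$, and that reconciles the factor $d_jd_k/D^2$ in $\Delta_{F_1}$ with the factor $D^2\delta_{0,i}$ in $\varepsilon_{F_1}$. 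The Frobenius relation $(\mu_{F_1}\otimes\mathrm{id})\circ(\mathrm{id}\otimes\Delta_{F_1})=\Delta_{F_1}\circ\mu_{F_1}=(\mathrm{id}\otimes\mu_{F_1})\circ(\Delta_{F_1}\otimes\mathrm{id})$ is checked the same way, comparing the three resulting graphs after one application of completeness.

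Commutativity $\mu_{F_1}\circ c_{F_1,F_1}=\mu_{F_1}$ and cocommutativity are then short computations: the braiding $c_{F_1,F_1}$ of $Z(\mathcal C)$ is implemented on underlying objects by $\gamma_L^{\mathrm{uo}}$, and feeding it into the multiplication (resp.\ comultiplication) coupons lets the over/under strands unwind against the evaluation pairings; alternatively cocommutativity follows from commutativity together with symmetry and the Frobenius relation. Symmetry of $F_1$ — equality of the two morphisms $F_1\to F_1^\vee$ built from the Frobenius form $\varepsilon_{F_1}\circ\mu_{F_1}$ and the left/right dualities — reduces to the coincidence of left and right traces, i.e.\ to sphericality of $\mathcal C$. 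The main obstacle is purely bookkeeping: tracking the dimension factors and powers of $D$ through each application of the completeness relation so that every composite emerges with exactly the coefficient asserted, and in particular confirming that the asymmetric-looking normalisations of $\mu_{F_1}$ and $\Delta_{F_1}$ are nonetheless consistent with the Frobenius (and with the special/symmetric) axioms; there is no conceptual difficulty beyond what is already packaged in Proposition~\ref{prop:useful} and Lemma~\ref{lem:mtc3}.
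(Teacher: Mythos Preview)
Your proposal is correct, and in fact it supplies considerably more detail than the paper itself: the paper does not prove Proposition~\ref{Prop:frob} at all but simply records it as a known result, referring to the review~\cite[Section~2.2]{MR3797694} and the references therein. Your outline --- identify $F_1$ as the canonical coend algebra (equivalently the image of $\bigoplus_i X_i^\vee\boxtimes X_i$ under $\Xi$, or the induction of the unit), then verify equivariance of the structure maps with respect to $\gamma_L^{\mathrm{uo}}$ and check the Frobenius, commutativity and symmetry axioms via the completeness relation --- is exactly the standard argument those references contain, so there is no real divergence in approach.

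One small remark: the invocation of Lemma~\ref{lem:mtc3} is not really needed for the (co)associativity, (co)unit, and Frobenius identities, which are purely spherical-fusion facts handled by Proposition~\ref{prop:useful} alone; Lemma~\ref{lem:mtc3} uses the braiding of $\mathcal C$ and is more naturally placed in the commutativity step (or not at all, since commutativity of $\mu_{F_1}$ under the under-over half-braiding is a direct strand-sliding computation). This is only a matter of over-citation, not a gap.
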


In order to show that the prescription given in the Section~\ref{subsec:Fund}
for the Frobenius algebra $F_{1}$ extends to a consistent system
of correlators, we need to show that the string-net we get on the
torus with one boundary circle is invariant under the mapping class
group action. This is guaranteed by its dimension according to~\cite[Theorem 3.4]{MR2551797}.
However, the consistency for the Cardy case can be seen in a much
more straight forward and geometric manner, and a closed form of the
correlators can be derived: it turns out that the string-nets we get,
in their most simplified forms, are as empty as possible.

\subsection{Consistency made explicit}

The coend $L=\bigoplus_{i\in\mathcal{I}(\mathcal{C})}X_{i}^{\vee}\otimes X_{i}$
can be also equipped with a different half-braiding that can be understood
from the central monad
\begin{center}
$\gamma^{\mathrm{non}}_{L;X}\coloneqq{\displaystyle \bigoplus_{i,j\in\mathcal{I}(\mathcal{C})}}d_{j}\;\scalebox{0.8}{\tikzfig{CE1}}$
\end{center}

We call it the \textit{non-crossing half-braiding} for the
obvious reason.

We denote $\widetilde{F}\coloneqq\big(L,\gamma_{L}^{\mathrm{non}}\big)\in Z(\mathcal{C})$.
There is also a naturally defined Frobenius algebra structure on this
object, with the multiplication and co-multiplication given by
\begin{center}
$\scalebox{0.8}{\tikzfig{CE2}}\coloneqq\bigoplus_{i\in\mathcal{I}(\mathcal{C})}d_{i}^{-1}
\scalebox{0.8}{\tikzfig{CE3}} \qquad\qquad \scalebox{0.8}{\tikzfig{CE4}}\coloneqq\bigoplus_{i\in\mathcal{I}(\mathcal{C})}
\scalebox{0.8}{\tikzfig{CE5}}$
\end{center}

It is easy to show that this is a special symmetric Frobenius
algebra.
\begin{Proposition}
\label{thm:S}
For a modular tensor category $\mathcal{C}$, the morphism
$$
S_{L}\coloneqq{\displaystyle \bigoplus_{i,j\in\mathcal{I}(\mathcal{C})}d_{j}}\scalebox{0.8}{\tikzfig{CE6}}\in\mathrm{End}_{\mathcal{C}}(L)
$$
is an isomorphism of Frobenius algebras in $Z(\mathcal{C})$ from
the Cardy bulk algebra $\big(F_{1},\mu_{F_{1}},\eta_{F_{1}}$, $\Delta_{F_{1}},\varepsilon_{F_{1}}\big)$
in Proposition~$\ref{Prop:frob}$ to the Frobenius algebra $\big(\widetilde{F},\mu_{\widetilde{F}},\eta_{\widetilde{F}},
\Delta_{\widetilde{F}},\varepsilon_{\widetilde{F}}\big)$
defined above, with the inverse given by
$$
S_{L}^{-1}\coloneqq{\displaystyle \bigoplus_{i,j\in\mathcal{I}(\mathcal{C})}\frac{d_{j}}{D^2}}\scalebox{0.8}{\tikzfig{CE7}}
$$
\end{Proposition}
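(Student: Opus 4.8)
The plan is to verify separately the four properties packaged into the statement: that $S_L$ is (i) a morphism $F_1\to\widetilde F$ in $Z(\mathcal C)$, i.e.\ that it intertwines the half-braidings $\gamma_L^{\mathrm{uo}}$ and $\gamma_L^{\mathrm{non}}$; (ii) invertible, with inverse the displayed $S_L^{-1}$; (iii) compatible with the multiplications and units; and (iv) compatible with the comultiplications and counits. Part (iv) I would not check by hand: in a Frobenius algebra the comultiplication is recovered from $(\mu,\eta,\varepsilon)$ through the nondegenerate Frobenius pairing $\varepsilon\circ\mu$, so once (iii) holds together with $\varepsilon_{\widetilde F}\circ S_L=\varepsilon_{F_1}$ (a one-line comparison of the two $D^2$ normalisations), $S_L$ automatically transports $\Delta_{F_1}$ to $\Delta_{\widetilde F}$. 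Thus the genuine content is (i), (ii), (iii).

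For (i), observe that componentwise $S_L$ is assembled from Hopf-link graphs, while the under-over half-braiding on the $X_i^\vee\otimes X_i$ summand is a single strand labelled $X$ running under and then over the two tensor factors. Post-composing with $S_L$ on the $L$-leg inserts the Kirby-coloured loop of $S_L$; using the encircling/sliding identity of Lemma~\ref{lem:mtc3} (and, where convenient, the cloaking relation of Lemma~\ref{lem:cloaking}) together with naturality of the braiding, this loop can be dragged past the $X$-strand, and the net effect is to exchange the under- and over-crossings, which is precisely the passage from $\gamma_L^{\mathrm{uo}}$ to $\gamma_L^{\mathrm{non}}$ on the target side. So $(S_L\otimes\mathrm{id}_X)\circ\gamma_{L;X}^{\mathrm{uo}}=\gamma_{L;X}^{\mathrm{non}}\circ(S_L\otimes\mathrm{id}_X)$ follows from a finite chain of isotopies and one application each of Proposition~\ref{prop:useful} and Lemma~\ref{lem:mtc3}.

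For (ii), I would compose $S_L$ with the displayed $S_L^{-1}$ and read off the $(i,j)$-component, which is a chain of two Hopf links; by the corollary of Lemma~\ref{lem:mtc3} --- equivalently by $\sum_k s_{i,k}s_{k,\bar j}=D^2\delta_{i,j}$ --- it collapses, and the explicit factor $D^{-2}$ and the orientation reversal built into the picture of $S_L^{-1}$ are exactly calibrated so that the result is $\mathrm{id}_{X_i^\vee\otimes X_i}$. The opposite composite is handled the same way, so $S_L$ is invertible with the stated inverse.

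For (iii), expand both $S_L\circ\mu_{F_1}$ and $\mu_{\widetilde F}\circ(S_L\otimes S_L)$ as direct sums over simple labels, carrying the coefficients $d_k$ and $d_i^{-1}$ coming from Proposition~\ref{Prop:frob} and the definition of $\widetilde F$; resolving an internal edge with Proposition~\ref{prop:useful} and collapsing the resulting Hopf links with Lemma~\ref{lem:mtc3} reconciles the $d_k$-versus-$d_i^{-1}$ mismatch and the surplus powers of $D$, yielding equality. Unit compatibility is obtained by restricting $S_L$ to the summand $X_0^\vee\otimes X_0=\mathbb I$ where $\eta_{F_1}$ is supported and matching the image with $\eta_{\widetilde F}$; counit compatibility is the $D^2$-bookkeeping already noted, after which (iv) is automatic, and $S_L$ is an isomorphism of Frobenius algebras in $Z(\mathcal C)$. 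The step I expect to be the real obstacle is (i): tracking precisely how the Kirby-coloured loop of $S_L$ threads the under- and over-crossings of $\gamma_L^{\mathrm{uo}}$ is where nondegeneracy of the $S$-matrix (modularity) is indispensable and where an error in a single $d_i$- or $D$-factor is easiest to make; parts (ii) and (iii), though more laborious, are mechanical once the normalisations in (i) are pinned down.
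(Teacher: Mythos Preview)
Your outline matches the paper's proof: both verify (i) via the completeness relation (Proposition~\ref{prop:useful}), reduce (ii) to Lemma~\ref{lem:mtc3}, and establish (iii) by direct graphical expansion. Two small remarks. First, the paper does not take your shortcut for (iv) but instead checks coalgebra compatibility by an explicit graphical calculation parallel to the one for (iii); your Frobenius-form argument (that an algebra isomorphism preserving $\varepsilon$ automatically preserves $\Delta$) is valid and slightly more economical. Second, Lemma~\ref{lem:cloaking} is a statement about string-net spaces on surfaces and is not the right tool for (i), which is a computation in $\mathcal C$; the paper uses only Proposition~\ref{prop:useful} there, and that already provides the sliding move you describe. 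Incidentally, your assessment of difficulty is inverted relative to the paper: step (i) turns out to be a one-line graphical identity, while (iii) --- and, in the paper's version, (iv) --- carry the bulk of the index and $d_i/D^2$ bookkeeping.
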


\begin{proof}
A more general form of the fact that the given morphisms are isomorphisms
of the Frobenius algebras (regarded as Frobenius algebras in $\mathcal{C}$)
was proven in~\cite[Proposition 4.3]{kong2008morita}. We~present
here a simple proof of the special case we need. Note that this can
be generalized to non-semisimple settings.

Using Proposition~\ref{prop:useful}, it is not hard to see that
$$
S_{L}\in\mathrm{Hom}_{Z(\mathcal{C})}\big(F_{1},\widetilde{F}\big) \qquad \text{and} \qquad S_{L}^{-1}\in\mathrm{Hom}_{Z(\mathcal{C})}\big(\widetilde{F},F_{1})\big.
$$
For instance
\begin{center}
$\scalebox{0.8}{\tikzfig{CE8}}=\bigoplus_{i,j\in\mathcal{I}(\mathcal{C})}d_{j}\scalebox{0.8}{\tikzfig{CE9}}
=\bigoplus_{i,j,k\in\mathcal{I}(\mathcal{C})}d_{j}d_{k}\scalebox{0.8}{\tikzfig{CE10}}=
\scalebox{0.8}{\tikzfig{CE11}}$
\end{center}

\noindent The fact that $S_{L}$ and $S_{L}^{-1}$ are inverse to
each other is equivalent to Lemma~\ref{lem:mtc3}.

To show that $S_{L}$ is an isomorphism of algebras, we notice
\begin{center}
$\scalebox{0.75}{\tikzfig{CE12}}={\displaystyle \bigoplus_{i,j,k,l,m,n\in\mathcal{I}(\mathcal{C})} \frac{d_{k}d_{l}d_{m}d_{n}}{D^4}}\scalebox{0.75}{\tikzfig{CE13}}
={\displaystyle \bigoplus_{i,j,k,l,n\in\mathcal{I}(\mathcal{C})} \frac{d_{k}d_{l}d_{n}}{D^4}}\scalebox{0.75}{\tikzfig{CE14}}$
\\[2ex]
$=\!\!{\displaystyle \bigoplus_{i,j,k,l,n\in\mathcal{I}(\mathcal{C})}\!\!\!\frac{d_{k}d_{l}d_{n}}{D^4}}
\scalebox{0.75}{\tikzfig{CE15}}
=\!\!{\displaystyle \bigoplus_{i,j,l\in\mathcal{I}(\mathcal{C})}\!\!\!\frac{d_{l}}{D^2}}\!\!
\scalebox{0.75}{\tikzfig{CE16}}
=\!\!{\displaystyle \bigoplus_{i\in\mathcal{I}(\mathcal{C})}d_{i}^{-1}}\!\!\scalebox{0.75} {\tikzfig{CE17}}=\!\!\scalebox{0.75}{\tikzfig{CE18}}$
\end{center}

\noindent Hence $S_{L}\circ\mu_{F}=\mu_{\widetilde{F}}\circ\big(S_{L}\otimes S_{L}\big)$.
Similarly, one shows that $S_{L}$ is also an isomorphism of coalgebras
\begin{center}
$\scalebox{0.75}{\tikzfig{CE19}}
\!\!=\!\!\!{\displaystyle \bigoplus_{i,j,k,l,m,n\in\mathcal{I}(\mathcal{C})}\!\!\!\!\frac{d_{j}d_{k}d_{l}d_{m}d_{n}}{D^4}}
\scalebox{0.75}{\tikzfig{CE20}}\!\!\!=\!\!\!{\displaystyle \bigoplus_{i,j,k,m,n\in\mathcal{I}(\mathcal{C})}\!\!\!\!\frac{d_{j}d_{k}d_{m}d_{n}}{D^4}}
\scalebox{0.75}{\tikzfig{CE21}}$
\\[1ex]
$={\displaystyle \bigoplus_{i,k,m\in\mathcal{I}(\mathcal{C})}\frac{d_{k}d_{m}}{D^2}}
\scalebox{0.75}{\tikzfig{CE22}}
={\displaystyle \bigoplus_{i,k,m\in\mathcal{I}(\mathcal{C})}\frac{d_{k}d_{m}}{D^2}}
\scalebox{0.75}{\tikzfig{CE23}}$
\\[1ex]
$={\displaystyle \bigoplus_{i,k,m\in\mathcal{I}(\mathcal{C})}\frac{d_{k}d_{m}}{D^2}}
\scalebox{0.75}{\tikzfig{CE24}}={\displaystyle \bigoplus_{i\in\mathcal{I}(\mathcal{C})}}\scalebox{0.75}{\tikzfig{CE25}}=\scalebox{0.75}{\tikzfig{CE26}}$
\end{center}
\vspace{-6mm}
\end{proof}

\begin{Corollary}
\label{cor:F tilda}
For a modular tensor category $\mathcal{C}$,
$\big(\widetilde{F},\mu_{\widetilde{F}},\eta_{\widetilde{F}}, \Delta_{\widetilde{F}},\varepsilon_{\widetilde{F}}\big)$
is a commutative, symmetric Frobenius algebra in $Z(\mathcal{C})$.
In particular, $\widetilde{F}$ has trivial twist.
\end{Corollary}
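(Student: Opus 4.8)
The plan is to transport the entire structure along the Frobenius algebra isomorphism constructed in Proposition~\ref{thm:S}. That proposition provides an isomorphism $S_L\colon F_1\xrightarrow{\cong}\widetilde F$ of Frobenius algebras in $Z(\mathcal{C})$, and Proposition~\ref{Prop:frob} tells us that the source $F_1$ is a commutative, symmetric Frobenius algebra. Since being commutative and being symmetric are properties expressed by equations among the structure morphisms together with the braiding and the duality of $Z(\mathcal{C})$, and since $S_L$ is in particular a morphism in the braided rigid category $Z(\mathcal{C})$ --- hence automatically compatible with braidings and with duals --- these properties pass from $F_1$ to $\widetilde F$. So the corollary is essentially the statement that the listed properties are invariant under isomorphism of Frobenius algebras in $Z(\mathcal{C})$.

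Concretely, I would first record what the isomorphism gives: $S_L\circ\mu_{F_1}=\mu_{\widetilde F}\circ(S_L\otimes S_L)$, $S_L\circ\eta_{F_1}=\eta_{\widetilde F}$, $\Delta_{\widetilde F}\circ S_L=(S_L\otimes S_L)\circ\Delta_{F_1}$, and $\varepsilon_{\widetilde F}\circ S_L=\varepsilon_{F_1}$. From the commutativity of $F_1$, written as $\mu_{F_1}\circ\beta_{F_1,F_1}=\mu_{F_1}$, one deduces $\mu_{\widetilde F}\circ\beta_{\widetilde F,\widetilde F}=\mu_{\widetilde F}$ by pre- and post-composing with $S_L$ and its inverse and using $\beta_{\widetilde F,\widetilde F}\circ(S_L\otimes S_L)=(S_L\otimes S_L)\circ\beta_{F_1,F_1}$; the symmetry condition --- equality of the two Frobenius pairings $\widetilde F\otimes\widetilde F\to\mathbb{I}$ obtained from $\varepsilon_{\widetilde F}\circ\mu_{\widetilde F}$ via the left and right dualities --- is transported in the same way, using that $S_L$ intertwines the duality morphisms. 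All of this is routine and I would leave it at the level of an indication.

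For the final assertion, once $\widetilde F$ is known to be commutative and symmetric, \cite[Proposition~2.25]{MR2187404} --- the same input used in the proof of Lemma~\ref{lem:F} --- applies and yields that $\widetilde F$ has trivial twist. Alternatively, since $S_L$ is an isomorphism in the ribbon category $Z(\mathcal{C})$ it intertwines the twists, so the trivial twist of $F_1$ (which itself follows from \cite[Proposition~2.25]{MR2187404}) forces the trivial twist of $\widetilde F$.

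I do not expect any genuine obstacle here: the mathematical content is entirely contained in Propositions~\ref{Prop:frob} and~\ref{thm:S}. The only point requiring a little care in the write-up is the transport of the \emph{symmetric} condition, since it refers to the duality of $Z(\mathcal{C})$ and not merely to the algebra operations; but this causes no trouble because $S_L$, being an isomorphism in $Z(\mathcal{C})$, is compatible with taking duals, so the two Frobenius pairings on $\widetilde F$ pull back to the two pairings on $F_1$, which coincide.
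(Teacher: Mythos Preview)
Your proposal is correct and matches the paper's approach exactly: the paper states this result as an immediate corollary of Propositions~\ref{Prop:frob} and~\ref{thm:S} with no explicit proof, and your transport-of-structure argument along the Frobenius isomorphism $S_L$ is precisely the intended justification. The only addition worth noting is that the paper then remarks that the corollary in fact holds for any spherical fusion category $\mathcal{C}$, which your argument (relying on $S_L$ and hence on modularity) does not cover, but that extension is not part of the corollary itself.
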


\begin{Remark}
In fact, Corollary~\ref{cor:F tilda} holds true for any spherical
fusion category $\mathcal{C}$.
\end{Remark}

It is a general fact that isomorphisms between bulk algebras induce
isomorphisms of the spaces of conformal blocks. In the case at hand,
this is implemented by composing the string-nets with the morphism
$S_{L}$ near the outgoing boundary and precomposing the string-net
with~$S_{L}^{-1}$ near the ingoing boundary. For instance, applying
to the invariants on pairs of pants, we~get
\begin{center}
$\tikzfig{CE27}\qquad\qquad\text{and}\qquad\qquad\tikzfig{CE29}$
\end{center}

\noindent Here the white boxes stand for $S_{L}$ and the gray ones
stand for $S_{L}^{-1}$. Since both are morphisms in $Z(\mathcal{C})$,
it makes no difference which side of the projectors we put the boxes
on, as long as we use the correct half-braidings.

If we take $\widetilde{F}$ as our bulk object, we get another set
of conformal blocks
$$
\mathrm{Bl}^{\widetilde{F}}\colon\ \mathcal{S}\text{urf}\to\mathcal{V}\text{ect}_{\mathbb{C}}
$$
 as well as a new set of correlators
$$
v^{\widetilde{F}}\colon\ \Delta_{\mathbb{C}}\to\mathrm{Bl}^{\widetilde{F}}.
$$

\noindent In fact, the induced isomorphisms of string-net spaces give
rise to a natural isomorphisms of~con\-for\-mal blocks
$$
\mathrm{Bl}^{S_{L}}\colon\ \mathrm{Bl}^{F_{1}}\to\mathrm{Bl}^{\widetilde{F}},
$$
 since the isomorphisms intertwine the action of mapping class groups
and sewing.

Moreover, due to the fact that
\begin{center}
{}\phantom{and \qquad}$\scalebox{0.95}{\tikzfig{CE27}}=\scalebox{0.95}{\tikzfig{CE28}}$
\end{center}

\noindent
\begin{center}
and \qquad $\scalebox{0.95}{\tikzfig{CE29}}=\scalebox{0.95}{\tikzfig{CE30}}$
\end{center}

\noindent we get a commutative diagram of natural transformations
\begin{center}
$\tikzfig{CE34}$
\end{center}

\noindent Intuitively, the two isomorphic Frobenius algebras produce
equivalent sets of correlators. The natural isomorphism $\mathrm{Bl}^{S_{L}}$
gives the precise way to relate them.

It turns out that the correlators given by the Frobenius algebra $\big(\widetilde{F},\mu_{\widetilde{F}},\eta_{\widetilde{F}},\Delta_{\widetilde{F}}, \varepsilon_{\widetilde{F}}\big)$
are particularly easy to compute:
\begin{Theorem}
\label{thm:main}Let $\Sigma_{p\mid q}^{g}$ be a surface of genus
$g$ with $p$ ingoing and $q$ outgoing boundaries, where $p,q,g\in\mathbb{Z}_{\geq0}$.

{\samepage
$1.$~The correlator for $\Sigma_{p\mid q}^{g}$ with bulk field $\widetilde{F}$
is given by the following string net:
\begin{center}
$v^{\widetilde{F}}\big(\Sigma_{p\mid q}^{g}\big)={\displaystyle \sum_{i_1,\dots,i_p,j_1,\dots,j_q\in\mathcal{I}(\mathcal{C})}}\dfrac{d_{j_1}\cdots d_{j_q}}{D^{2p}}\;\tikzfig{CE35}$
\end{center}

}

$2.$~Consequently, the correlator for $\Sigma_{p\mid q}^{g}$ with bulk
field $F_{1}$ is given by the following string net:
\begin{center}
$v^{F_{1}}\big(\Sigma_{p\mid q}^{g}\big)=\!\!\!{\displaystyle \sum_{\substack{i_1,\dots,i_p,j_1,\dots,j_q, \\ k_1,\dots,k_p,l_1,\dots,l_q\in\mathcal{I}(\mathcal{C})}}}\!\!\!\!\!\dfrac{d_{j_1}\cdots d_{j_q}d_{k_1}\cdots d_{k_p}d_{l_1}\cdots d_{l_q}}{D^{2(p+q)}}\;\tikzfig{CE40}$
\end{center}

\end{Theorem}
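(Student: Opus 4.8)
The plan is to prove part~1 by decomposing $\Sigma_{p\mid q}^{g}$ into the five generating surfaces of Section~\ref{subsec:Fund}, gluing the corresponding fundamental string-nets $v_{\bullet\mid\bullet}^{\widetilde{F}}$, and showing that for the algebra $\widetilde{F}$ all the interior structure collapses; part~2 then follows formally from part~1 by transporting along the Frobenius-algebra isomorphism $S_{L}\colon F_{1}\to\widetilde{F}$ of Proposition~\ref{thm:S}. First note that $\widetilde{F}\cong F_{1}$ and $\dim(F_{1})=\dim(L)=\sum_{i\in\mathcal{I}(\mathcal{C})}d_{i}^{2}=D^{2}$, so $\widetilde{F}$ is a modular Frobenius algebra; hence by Lemma~\ref{lem:F} together with the $S$-invariance on the one-holed torus guaranteed by \cite[Theorem~3.4]{MR2551797}, the prescription $v^{\widetilde{F}}$ extends to a genuine consistent system and $v^{\widetilde{F}}\big(\Sigma_{p\mid q}^{g}\big)$ may be computed from \emph{any} pair-of-pants decomposition. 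I would fix the most economical one --- a linear chain of copies of $\Sigma_{1\mid2}^{0}$, $\Sigma_{2\mid1}^{0}$ and $\Sigma_{1\mid1}^{0}$ --- and induct on the number of pieces, equivalently on $2g+p+q$.

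\textbf{Base cases and the collapsing lemma.} For the base cases one substitutes the explicit structure morphisms $\mu_{\widetilde{F}}$, $\Delta_{\widetilde{F}}$, $\eta_{\widetilde{F}}$, $\varepsilon_{\widetilde{F}}$ and the non-crossing half-braiding $\gamma_{L}^{\mathrm{non}}$ into the definitions of $v_{\bullet\mid\bullet}^{F}$; since every interior coupon is then one of the ``diagonal'' direct-sum-over-$\mathcal{I}(\mathcal{C})$ vertices with no genuine crossings, these string-nets already have the shape claimed in part~1 for $g=0$, with the scalars read off directly. The inductive step rests on a sewing lemma: attaching a generating piece inserts exactly one boundary projector $P_{\widetilde{F}}$ --- a Kirby-coloured loop encircling an $L$-strand via $\gamma_{L}^{\mathrm{non}}$ --- together with one $\widetilde{F}$-product or $\widetilde{F}$-coproduct vertex, and this combination simplifies, using Proposition~\ref{prop:useful}, the cloaking relation (Lemma~\ref{lem:cloaking}), the two Kirby-loop identities (the unlabelled Lemma with the $s_{j,i}$-relation and Lemma~\ref{lem:mtc3}), and the special commutative symmetric Frobenius structure of $\widetilde{F}$, to a bare $X$-strand carrying only normalization scalars. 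Concretely: because the half-braiding is non-crossing, the Kirby loop in $P_{\widetilde{F}}$ does not link the ambient graph and is resolved by the MTC identities; the special property (so that $\mu_{\widetilde{F}}\circ\Delta_{\widetilde{F}}$ is a multiple of the identity) prevents any bubble from accumulating along the sewing circles; and commutativity/cocommutativity straighten the residual strands into the standard picture.

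\textbf{Assembling the answer and part~2.} Running this over the whole decomposition collapses every interior face, leaving one free index $i_{s}$ (resp.\ $j_{s}$) per boundary circle --- the summand of $L=\bigoplus_{k\in\mathcal{I}(\mathcal{C})}X_{k}^{\vee}\otimes X_{k}$ sitting there --- each handle contributing only a closed Kirby/$X$-loop with no new index, and a product of normalization constants which I would track to be exactly $d_{j_{1}}\cdots d_{j_{q}}/D^{2p}$; the asymmetry between ingoing and outgoing comes from the asymmetry between $\Delta_{\widetilde{F}}$ (normalization $1$) and $\mu_{\widetilde{F}}$ (normalization $d_{i}^{-1}$) and the $1/D^{2}$ in each projector. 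This proves part~1. For part~2, the commutative diagram of natural transformations preceding Theorem~\ref{thm:main} gives $v^{F_{1}}=(\mathrm{Bl}^{S_{L}})^{-1}\circ v^{\widetilde{F}}$, and $(\mathrm{Bl}^{S_{L}})^{-1}_{\Sigma_{p\mid q}^{g}}$ is realised on string-nets by inserting $S_{L}^{-1}$ near each of the $q$ outgoing boundary circles and $S_{L}$ near each of the $p$ ingoing ones in the picture from part~1, redrawn with the under-over half-braiding $\gamma_{L}^{\mathrm{uo}}$ of $F_{1}$. Since $S_{L}=\bigoplus_{a,b}d_{b}(\cdots)$ contributes one new summation index and a factor $d_{k_{s}}$, while $S_{L}^{-1}=\bigoplus_{a,b}(d_{b}/D^{2})(\cdots)$ contributes one new index and a factor $d_{l_{s}}/D^{2}$, the scalar $d_{j_{1}}\cdots d_{j_{q}}/D^{2p}$ becomes $d_{j_{1}}\cdots d_{j_{q}}d_{k_{1}}\cdots d_{k_{p}}d_{l_{1}}\cdots d_{l_{q}}/D^{2(p+q)}$, which is the asserted string-net.

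\textbf{Main obstacle.} The real work is the collapsing lemma inside the inductive step: verifying that the Kirby loop of $P_{\widetilde{F}}$ together with the non-crossing half-braiding genuinely trivializes --- rather than producing linked strands --- and book-keeping the many normalization constants so that they assemble into precisely $d_{j_{1}}\cdots d_{j_{q}}/D^{2p}$ and nothing more. Everything else is formal manipulation with the identities already established in Sections~\ref{sec:String-net-model} and~\ref{subsec:Modular-tensor-categories}, and the independence of the answer from the chosen decomposition is supplied for free by Lemma~\ref{lem:F} and modularity of $\widetilde{F}$.
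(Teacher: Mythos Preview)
Your proposal is correct and follows essentially the paper's route: check the formula on the generating surfaces $\Sigma^0_{p\mid q}$ with $p+q\le 3$ by direct substitution of $\mu_{\widetilde{F}}$, $\Delta_{\widetilde{F}}$, $\gamma_L^{\mathrm{non}}$, then observe that each sewing produces a contractible Kirby circle cancelling one factor of $D^2$, and finally transport via $(\mathrm{Bl}^{S_L})^{-1}$ for part~2. The paper's execution is leaner than your inductive ``collapsing lemma''---it handles the sewing step in a single sentence using only the contractibility observation, without invoking cloaking or the $s$-matrix identities---but the overall strategy and the bookkeeping of the scalars are the same.
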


\begin{proof}
We only have to check the cases in which $g=0$ and $p+q\leq3$.
For $\Sigma_{2\mid1}^{0}$, we find
\begin{center}
$v^{\widetilde{F}}\big(\Sigma_{2\mid 1}^{0}\big)=\tikzfig{CE28}={\displaystyle \sum_{i,j,k,l,m,n,o\in\mathcal{I}(\mathcal{C})}\frac{d_{k}d_{l}d_{m}d_{n}d_{o}}{D^6}}\;\tikzfig{CE36}$\\$\phantom{A}$\\$={\displaystyle \sum_{i,j,k,l\in\mathcal{I}(\mathcal{C})}\frac{d_{k}d_{l}}{D^6}}\;\tikzfig{CE37}={\displaystyle \sum_{i,j,k\in\mathcal{I}(\mathcal{C})}\frac{d_{k}}{D^4}}\;\tikzfig{CE38}$
\end{center}

{\samepage\noindent Similarly, we have
\begin{center}
$v^{\widetilde{F}}\big(\Sigma_{1\mid 2}^{0}\big)=\tikzfig{CE30}={\displaystyle \sum_{i,j,k\in\mathcal{I}(\mathcal{C})}\frac{d_{j}d_{k}}{D^2}}\;\tikzfig{CE39}$
\end{center}

}

 The arguments concerning the unit and counit are even more straight
forward. Notice that, whenever we sew together a pair of boundaries,
we get a contractible circle that cancels out a~factor of $D^{2}$.

By applying the inverse of the natural isomorphism $\mathrm{Bl}^{S_{L}}$,
we get the second part of the statement.
\end{proof}
\begin{Remark}
It can be seen from the proof above, even though the definition of
the Frobenius algebra $F_{1}$ requires $\mathcal{C}$ to be braided
and the construction of the isomorphisms in Proposition~\ref{thm:S}
requires $\mathcal{C}$ to be modular, the Frobenius algebra $\widetilde{F}\in Z(\mathcal{C})$
gives rise to a consistent system of~bulk field correlators for any
spherical fusion category $\mathcal{C}$.
\end{Remark}

Theorem~\ref{thm:main} allows us to compute in particular the zero-point
correlator on a torus. Consider the following set of vectors $\{G_{i,j}\}_{i,j\in\mathcal{I}(\mathcal{C})}$
in the string-net space of a torus, where
\begin{center}
$G_{i,j}\coloneqq{\displaystyle \sum_{k\in\mathcal{I}(\mathcal{C})}\frac{d_{k}}{D^{2}}}\;\scalebox{0.65}{\text{\LARGE\tikzfig{MT3}}}$
\end{center}

\noindent here the opposite sides of the square are identified so
the resulting surface is a torus. For a~modular tensor category $\mathcal{C}$,
every simple object in $Z(\mathcal{C})\simeq\mathcal{C}^{\rm rev}\boxtimes\mathcal{C}$
is isomorphic to $Z_{(i,j)}\coloneqq\big(X_{i}\otimes X_{j},\gamma_{(i,j)}\big)$
for some $i,j\in\mathcal{I}(\mathcal{C})$ and the half braiding $\gamma_{(i,j)}$
given by the under-over half-braiding. It~can be seen from the following
representation of the string-net space associated to the torus
$$
{\displaystyle Z_{{\rm SN},\mathcal{C}}\big(\Sigma_{0}^{1}\big)\cong\bigoplus_{k\in\mathcal{I}(Z(\mathcal{C}))} Z_{{\rm SN},\mathcal{C}}\big(\Sigma_{2}^{0},Z_{k}^{\vee},Z_{k}\big) \cong \bigoplus_{i,j\in\mathcal{I}(\mathcal{C})}Z_{{\rm SN},\mathcal{C}}\big(\Sigma_{2}^{0},Z_{(i,j)}^{\vee},Z_{(i,j)}\big)}
$$
that follows from factorization that $\{G_{i,j}\}_{i,j\in\mathcal{I}(\mathcal{C})}$, up to the action of the mapping class group $\mathrm{Map}\big(\Sigma_{0\mid0}^{1}\big)\cong\mathrm{SL}(2,\mathbb{Z})$,
is a basis for the vector space $Z_{{\rm SN},\mathcal{C}}\big(\Sigma_{0}^{1}\big)$.
As a result of Theorem~\ref{thm:main}, the torus partition function
$v^{F_{1}}\big(\Sigma_{0\mid0}^{1}\big)$ is the \textit{empty string-net},
which is obviously invariant under the mapping class group action.
When written in the following form
\begin{center}
$v^{F_{1}}\big(\Sigma_{0\mid0}^{1}\big)\coloneqq\tikzfig{MT1}={\displaystyle \sum_{i,j\in\mathcal{I}(\mathcal{C})}\frac{d_{j}}{D^{2}}}\scalebox{0.65}{\text{\LARGE\tikzfig{MT2}}}
={\displaystyle \sum_{i\in\mathcal{I}(\mathcal{C})}G_{\bar{i},i}}$,
\end{center}

\noindent the correlator is expressed as a linear combination of
the basis vectors $\{G_{i,j}\}_{i,j\in\mathcal{I}(\mathcal{C})}$
with the coefficients $(\delta_{i,\bar{j}})_{i,j\in\mathcal{I}(\mathcal{C})}$,
which are the entries of the charge conjugation matrix.

\subsection*{Acknowledgements}
 We thank Alain Brugui\`eres, J\"urgen Fuchs, Eilind Karlsson and Vincentas Mulevi\v{c}ius for helpful discussions. The authors are partially supported by the RTG 1670 ``Mathematics inspired by String theory and Quantum Field Theory'' and by the Deutsche Forschungsgemeinschaft (DFG, German Research Foundation) under Germany's Excellence Strategy -- EXC 2121 ``Quantum Universe''-QT.2.

\pdfbookmark[1]{References}{ref}
\LastPageEnding


\begin{thebibliography}{99}
\footnotesize\itemsep=0pt

\bibitem{MR1780935}
Bakalov B., Kirillov Jr. A., On the Lego-{T}eichm\"uller game,
 \href{https://doi.org/10.1007/BF01679714}{\textit{Transform. Groups}} \textbf{5} (2000), 207--244,
 \href{https://arxiv.org/abs/math.GT/9809057}{arXiv:math.GT/9809057}.

\bibitem{MR1797619}
Bakalov B., Kirillov Jr. A., Lectures on tensor categories and modular
 functors, \textit{University Lecture Series}, Vol.~21, \href{https://doi.org/10.1090/ulect/021}{Amer. Math. Soc.},
 Providence, RI, 2001.

\bibitem{balsam2010turaevviro2}
Balsam B., Turaev--{V}iro theory as an extended {TQFT}~{II},
 \href{https://arxiv.org/abs/1010.1222}{arXiv:1010.1222}.

\bibitem{balsam2010turaevviro3}
Balsam B., Turaev--{V}iro theory as an extended {TQFT}~{III},
 \href{https://arxiv.org/abs/1012.0560}{arXiv:1012.0560}.

\bibitem{MR2183279}
Etingof P., Nikshych D., Ostrik V., On fusion categories, \href{https://doi.org/10.4007/annals.2005.162.581}{\textit{Ann. of
 Math.}} \textbf{162} (2005), 581--642, \href{https://arxiv.org/abs/math.QA/0203060}{arXiv:math.QA/0203060}.

\bibitem{Felder:1999mq}
Felder G., Fr\"ohlich J., Fuchs J., Schweigert C., Correlation functions and
 boundary conditions in rational conformal field theory and three-dimensional
 topology, \href{https://doi.org/10.1023/A:1014903315415}{\textit{Compositio Math.}} \textbf{131} (2002), 189--237,
 \href{https://arxiv.org/abs/hep-th/9912239}{arXiv:hep-th/9912239}.

\bibitem{MR2259258}
Fjelstad J., Fuchs J., Runkel I., Schweigert C., T{FT} construction of {RCFT}
 correlators. {V}.~{P}roof of modular invariance and factorisation,
 \textit{Theory Appl. Categ.} \textbf{16} (2006), 16, 342--433,
 \href{https://arxiv.org/abs/hep-th/0503194}{arXiv:hep-th/0503194}.

\bibitem{MR2187404}
Fr\"ohlich J., Fuchs J., Runkel I., Schweigert C., Correspondences of ribbon
 categories, \href{https://doi.org/10.1016/j.aim.2005.04.007}{\textit{Adv. Math.}} \textbf{199} (2006), 192--329,
 \href{https://arxiv.org/abs/math.CT/0309465}{arXiv:math.CT/0309465}.

\bibitem{MR3797694}
Fuchs J., Gannon T., Schaumann G., Schweigert C., The logarithmic {C}ardy case:
 boundary states and annuli, \href{https://doi.org/10.1016/j.nuclphysb.2018.03.005}{\textit{Nuclear Phys.~B}} \textbf{930} (2018),
 287--327, \href{https://arxiv.org/abs/1712.01922}{arXiv:1712.01922}.

\bibitem{MR1940282}
Fuchs J., Runkel I., Schweigert C., T{FT} construction of {RCFT} correlators.
 {I}.~{P}artition functions, \href{https://doi.org/10.1016/S0550-3213(02)00744-7}{\textit{Nuclear Phys.~B}} \textbf{646} (2002),
 353--497, \href{https://arxiv.org/abs/hep-th/0204148}{arXiv:hep-th/0204148}.

\bibitem{MR2026879}
Fuchs J., Runkel I., Schweigert C., T{FT} construction of {RCFT} correlators.
 {II}.~{U}noriented world sheets, \href{https://doi.org/10.1016/j.nuclphysb.2003.11.026}{\textit{Nuclear Phys.~B}} \textbf{678}
 (2004), 511--637, \href{https://arxiv.org/abs/hep-th/0306164}{arXiv:hep-th/0306164}.

\bibitem{MR2076134}
Fuchs J., Runkel I., Schweigert C., T{FT} construction of {RCFT} correlators.
 {III}.~{S}imple currents, \href{https://doi.org/10.1016/j.nuclphysb.2004.05.014}{\textit{Nuclear Phys.~B}} \textbf{694} (2004),
 277--353, \href{https://arxiv.org/abs/hep-th/0403157}{arXiv:hep-th/0403157}.

\bibitem{MR2137114}
Fuchs J., Runkel I., Schweigert C., T{FT} construction of {RCFT} correlators.
 {IV}.~{S}tructure constants and correlation functions, \href{https://doi.org/10.1016/j.nuclphysb.2005.03.018}{\textit{Nuclear
 Phys.~B}} \textbf{715} (2005), 539--638, \href{https://arxiv.org/abs/hep-th/0412290}{arXiv:hep-th/0412290}.

\bibitem{MR3590526}
Fuchs J., Schweigert C., Consistent systems of correlators in non-semisimple
 conformal field theory, \href{https://doi.org/10.1016/j.aim.2016.11.020}{\textit{Adv. Math.}} \textbf{307} (2017), 598--639,
 \href{https://arxiv.org/abs/1604.01143}{arXiv:1604.01143}.

\bibitem{Goosen2018Oriented1V}
Goosen G., Oriented 123-TQFTs via string-nets and state-sums, Ph.D.~Thesis,
 {S}tellenbosch University, 2018.

\bibitem{MR1752293}
Hatcher A., Lochak P., Schneps L., On the {T}eichm\"uller tower of mapping
 class groups, \href{https://doi.org/10.1515/crll.2000.028}{\textit{J.~Reine Angew. Math.}} \textbf{521} (2000), 1--24.

\bibitem{kirillov2011string}
Kirillov Jr. A., String-net model of {T}uraev--{V}iro invariants,
 \href{https://arxiv.org/abs/1106.6033}{arXiv:1106.6033}.

\bibitem{KirillovBalsam2010}
Kirillov Jr. A., Balsam B., Turaev--{V}iro invariants as an extended {T}QFT,
 \href{https://arxiv.org/abs/1004.1533}{arXiv:1004.1533}.

\bibitem{koenig2010quantum}
Koenig R., Kuperberg G., Reichardt B.W., Quantum computation with
 {T}uraev--{V}iro codes, \href{https://doi.org/10.1016/j.aop.2010.08.001}{\textit{Ann. Physics}} \textbf{325} (2010),
 2707--2749, \href{https://arxiv.org/abs/1002.2816}{arXiv:1002.2816}.

\bibitem{kong2008morita}
Kong L., Runkel I., Morita classes of algebras in modular tensor categories,
 \href{https://doi.org/10.1016/j.aim.2008.07.004}{\textit{Adv. Math.}} \textbf{219} (2008), 1548--1576, \href{https://arxiv.org/abs/0708.1897}{arXiv:0708.1897}.

\bibitem{MR2551797}
Kong L., Runkel I., Cardy algebras and sewing constraints.~{I}, \href{https://doi.org/10.1007/s00220-009-0901-6}{\textit{Comm.
 Math. Phys.}} \textbf{292} (2009), 871--912, \href{https://arxiv.org/abs/0807.3356}{arXiv:0807.3356}.

\bibitem{Levin2005}
Levin M.A., Wen X.-G., String-net condensation: a physical mechanism for
 topological phases, \href{https://doi.org/10.1103/PhysRevB.71.045110}{\textit{Phys. Rev.~B}} \textbf{71} (2005), 045110,
 21~pages, \href{https://arxiv.org/abs/cond-mat/0404617}{arXiv:cond-mat/0404617}.

\bibitem{MR2381536}
Ng S.-H., Schauenburg P., Higher {F}robenius--{S}chur indicators for pivotal
 categories, in Hopf algebras and generalizations, \textit{Contemp. Math.},
 Vol.~441, \href{https://doi.org/10.1090/conm/441/08500}{Amer. Math. Soc.}, Providence, RI, 2007, 63--90,
 \href{https://arxiv.org/abs/math.QA/0503167}{arXiv:math.QA/0503167}.

\bibitem{MR3996323}
Shimizu K., Non-degeneracy conditions for braided finite tensor categories,
 \href{https://doi.org/10.1016/j.aim.2019.106778}{\textit{Adv. Math.}} \textbf{355} (2019), 106778, 36~pages,
 \href{https://arxiv.org/abs/1602.06534}{arXiv:1602.06534}.

\bibitem{traube2020cardy}
Traube M., Cardy algebras, sewing constraints and string-nets,
 \href{https://arxiv.org/abs/2009.11895}{arXiv:2009.11895}.

\bibitem{turaev2010two}
Turaev V., Virelizier A., On two approaches to 3-dimensional {TQFT}s,
 \href{https://arxiv.org/abs/1006.3501}{arXiv:1006.3501}.

\end{thebibliography}
\end{document}